\documentclass[11pt, letterpaper]{article}

\usepackage[margin=1in]{geometry}
\usepackage[left]{lineno}

\usepackage[T1]{fontenc}
\usepackage{lmodern}
\usepackage{amsmath, amssymb, amsthm, dsfont, mathtools}
\usepackage{thmtools}
\usepackage{thm-restate}
\usepackage{xspace}
\usepackage{graphicx}
\usepackage[dvipsnames]{xcolor}
\usepackage{tikz}
\usetikzlibrary{decorations.pathmorphing, arrows.meta}
\usetikzlibrary{decorations.pathreplacing, calc}
\usetikzlibrary{shapes.geometric}
\usetikzlibrary {decorations.markings}
\usepackage{subcaption}
\usepackage{nicefrac}
\usepackage{bm}
\usepackage{thm-restate}
\usepackage{enumitem}
\usepackage{aligned-overset}

\theoremstyle{definition}
\newtheorem{definition}{Definition}

\theoremstyle{plain}
\newtheorem{theorem}[definition]{Theorem}

\newtheorem{lemma}[definition]{Lemma}
\newtheorem{proposition}[definition]{Proposition}

\newtheorem{claim}[definition]{Claim}

\usepackage{hyperref}
\usepackage[nameinlink]{cleveref}

\crefformat{lemma}{#2Lemma~#1#3}
\crefformat{theorem}{#2Theorem~#1#3}
\crefformat{claim}{#2Claim~#1#3}
\crefformat{proposition}{#2Proposition~#1#3}
\crefformat{equation}{#2(#1)#3}
\crefformat{corollary}{#2Corollary~#1#3}
\crefformat{definition}{#2Definition~#1#3}
\crefformat{figure}{#2Figure~#1#3}
\crefformat{observation}{#2Observation~#1#3}
\crefformat{section}{#2Section~#1#3}
\crefformat{question}{#2Question~#1#3}

\definecolor{LinkViolet}{hsb}{0.37,1,0.5}
\hypersetup{
  colorlinks=true,
  linkcolor=Violet,
  citecolor=Violet,
  urlcolor=green!40!black
}
\tikzset{
  vertex/.style={circle, draw, fill=white, minimum size=4mm, inner sep=0pt,
                 inner sep=0pt, font=\small},
  cur/.style={circle, draw, fill=black!18, minimum size=4.5mm,
                 inner sep=0pt, font=\small},
  epath/.style={thick, decorate,
                 decoration={snake, amplitude=0.5mm, segment length=3mm,
                             pre length=2mm, post length=2mm}},
  cost/.style={fill=white, inner sep=1.5pt, font=\small}}

\newcommand{\N}{\mathbb{N}}
\newcommand{\R}{\mathbb{R}}

\newcommand{\bigO}{\mathcal{O}}

\newcommand{\alg}{\ensuremath{\textsc{Alg}}\xspace}
\newcommand{\opt}{\ensuremath{\textsc{Opt}}\xspace}

\newcommand{\ualg}{u_\alg}
\newcommand{\uopt}{u_\opt}
\newcommand{\uex}{u_\mathrm{Ex}}

\renewcommand{\epsilon}{\varepsilon}

\newcommand{\blocking}{\ensuremath{\textsc{Blocking}}\xspace}

\begin{document}
\title{A lower bound of 4 for online graph exploration}
\author{Júlia Baligács\thanks{Email: \href{mailto:jbaligacs@gmail.com}{jbaligacs@gmail.com}}
\vspace{2mm}\\ University of Oxford}
\date{}
\maketitle              % typeset the header of the contribution
\begin{abstract}
	In the online graph exploration problem, a single agent needs to visit every vertex of an initially unknown graph, which is learned over time in an online fashion, and return to its starting position. We prove that the competitive ratio of this problem is at least 4, improving on the previously best known lower bound of 10/3. 
	A key ingredient of our proof is showing that several restrictions can be imposed on the agent's behavior without affecting the competitive ratio. 
	As a byproduct, we also obtain that certain graph properties, such as the triangle inequality or being subcubic, can be assumed without affecting the competitive ratio.

\end{abstract}

\section{Introduction}

We study the \emph{online graph exploration problem} proposed by Kalyanasundaram and Pruhs~\cite{pruhs94}.
In this setting, a single agent has to traverse an undirected, connected graph~$G=(V,E,w)$ with non-negative edge weights $w\colon E \to \mathbb{R}_{\geq 0}$.
We assume that every vertex and every edge has a unique identifier.
Importantly, the graph is initially unknown and is learned by the agent over time in an online fashion.
More precisely, upon visiting a vertex for the first time, the agent learns the identifiers of the adjacent vertices as well as the identifiers and weights of the corresponding edges.
The \emph{cost} incurred when traversing an edge is simply its weight. 
The objective is to visit all vertices and return to the starting position, while minimizing the total cost.
In this work, we give an improved lower bound on the performance guarantee of any deterministic algorithm for this problem.

As usual in online optimization, we measure the quality of an algorithm in terms of \emph{competitive analysis}.
Given an algorithm \alg for exploration, we denote by $\alg(G,v)$ the total cost incurred on graph~$G$ with starting vertex~$v$.
By~$\opt(G)$, we denote the offline optimum cost, that is, the length of a shortest~TSP tour.
Note that the cost incurred by an online algorithm may depend on the starting position, whereas the offline optimum cost does not.
For~$\rho\geq 1$, we say that \alg is $\rho$-competitive if there exists a constant~$C\in \R_{\geq 0}$ such that~$\alg(G,v)\leq \rho \cdot \opt(G)+C$ for every graph $G$ and starting vertex~$v$ of~$G$.
If~$C=0$ is possible, we say that \alg is \emph{strictly $\rho$-competitive}.
The \emph{(strict) competitive ratio of \alg} is defined by~$\inf \{\rho\geq 1: \alg \text{ is (strictly) $\rho$-competitive}\}$
and the \emph{(strict) competitive ratio of the problem} is defined by~$\inf\{\rho\geq 1: \text{ there exists a (strictly) $\rho$-competitive algorithm}\}$. 
While in general, the strict competitive ratio may be larger than the competitive ratio, we prove in this work that these values coincide for online graph exploration (\cref{obs:exploration-competitive-ratio}).

Arguably, the most important question in the study of the online graph exploration problem is whether there exists a constant-competitive algorithm. This question was proposed by Kalyanasundaram and Pruhs in 1994~\cite{pruhs94} and is still open.
The best known upper bound on the competitive ratio is $\bigO(\log n)$, where $n$ denotes the number of vertices~\cite{NN}, and, prior to our work, the best known lower bound was~10/3~\cite{birx}.
Constant upper bounds are only known when restricting the problem to certain graph classes, such as graphs excluding a fixed minor (including planar, bounded-genus, and bounded-treewidth graphs)~\cite{baligacs}, or graphs that only use a constant number of different weights~\cite{megow}.

\paragraph*{Our results.}

Our main result is an improvement on the best known lower bound on the competitive ratio of the online graph exploration problem.
Here, a graph is \emph{subcubic} if all of its vertices have degree at most 3.

\begin{restatable}{theorem}{explorationlowerbound}\label{thm:4-lower-bound}
	The competitive ratio of the online graph exploration problem is at least 4, even when restricted to subcubic planar graphs.
\end{restatable}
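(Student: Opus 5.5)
We prove the bound with an adaptive adversary. For every $\epsilon>0$ and every deterministic algorithm \alg, the adversary builds online a subcubic planar graph $G$ with $\alg(G,v)\geq (4-\epsilon)\,\opt(G)$. To beat the additive constant~$C$, we scale all weights by a large factor. This handles the non-strict ratio, and in any case the abstract states that the strict and non-strict ratios coincide.

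\textbf{Normalizing the algorithm.} Before constructing anything, we restrict which algorithms we must fight. I would first argue, as announced in the abstract, that the following may be assumed without loss in the competitive ratio:
\begin{itemize}
\item \alg only moves along shortest paths in the explored subgraph to a not-yet-visited vertex adjacent to the explored region;
\item \alg never revisits needlessly;
\item \alg behaves consistently under relabeling of identifiers.
\end{itemize}
The tool is a simulation argument. Any algorithm can be converted into one obeying these rules whose cost on every instance is no larger, or is larger by at most a $(1+\epsilon)$ factor. With this in place, the adversary only has to answer the question ``which frontier vertex does \alg go to next'' rather than arbitrary walks. The same simulation lets us assume the graph satisfies the triangle inequality on explored distances and has maximum degree~3. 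To get degree~3, we replace high-degree vertices by small binary trees of zero-weight (or $\epsilon$-weight) edges. This is where the ``subcubic'' clause of the theorem comes from.

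\textbf{Adversarial construction.} I would extend the 10/3 gadget of Birx et al.\ recursively. The adversary maintains a current path or cycle, with unexplored ``branch'' edges of geometrically chosen weights hanging off it. Each time \alg commits to going one way, the adversary reveals that the unexplored side continues further. The offline optimum can traverse the resulting structure essentially once, as a cycle, which is why the graph stays planar. The online agent, by contrast, is forced to go back and forth: roughly once outward, once back, and then twice more to cover the branch the adversary extended. Choosing the weights as a geometric sequence whose limit ratio tends to~4 should give the bound. The amortization would be via a potential: we show $\alg$'s cost $\geq 4\cdot\opt - \Phi$, where $\Phi$ is the length of the current frontier, bounded by $\epsilon\cdot\opt$ after enough rounds.

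\textbf{Main obstacle.} The hardest step is the case analysis in the adversary. \alg may at any point deviate, for example by turning back early or by exploring a cheap branch. Each such deviation must be punished by a completion of the graph under which the ratio is still at least $4-\epsilon$. I would first try an invariant of the following form: at every decision time, the cost of \alg so far is at least $4$ times the length of a ``certified'' optimal tour of the explored part, minus a lower-order term. Then I would check that each possible move of a normalized algorithm either preserves the invariant or allows an immediate closing of the graph. The normalization step is exactly what makes this finite case check feasible.
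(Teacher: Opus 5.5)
There is a genuine gap: your proposal never gets to the part that produces the constant $4$. The construction paragraph (``geometrically chosen weights'', ``should give the bound'') and the invariant you plan to check are not tied to any concrete graph. The step you call the main obstacle is exactly where the earlier constructions of Dobrev et al.\ and Birx et al.\ stop at $2.5$ and $10/3$: showing that an agent which turns back early or zigzags cannot escape ratio $4-\epsilon$. The paper needs three ingredients that your plan does not supply.
\begin{enumerate}
\item A recursively built block in which the agent's \emph{first} traversal costs about three times the offline cost. Level $d$ consists of two chains of level-$(d-1)$ blocks growing from $\ualg$. The adversary stops either when the agent returns to $s$, which costs it $3k^d$ on entry edges, or when one chain reaches $k/2-1$ explored blocks. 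In the second case the agent must detour to $\uopt$ at cost $2k^d$ before it can leave. The offline agent pays $k^d$ once and sweeps both chains in one pass. The recursion $r_d\geq (3+(d+1)r_{d-1}-(3d+4)/k)/(d+2+3/k)$ then tends to $3$.
\item A correction for the discount on the second traversal. That traversal costs only $l_B=\opt_B-z_B$, and the zigzag length $z_B$ grows precisely when the agent zigzags. So one must bound the core cost $\core_B=\alg_B-z_B\geq(3-\epsilon)\opt_B$, not $\alg_B$ itself. This yields $\alg_B+l_B=\core_B+\opt_B\geq(4-\epsilon)\opt_B$.
\item The cycle-of-chains arrangement. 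It forces all but $O(N+M)$ blocks to be traversed twice online and once offline.
\end{enumerate}
Your ``once outward, once back, twice more'' count and the frontier potential $\Phi$ are not connected to any mechanism that would deliver these estimates.

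Your normalization step also targets the wrong properties. Restricting to shortest-path moves toward frontier vertices is harmless, but it gives the adversary no leverage. The block argument needs R1): the agent learns only edge identifiers and weights, so it cannot tell that two boundary edges share an endpoint. This is what lets the adversary later merge both exit edges into one vertex, or route them to $\uopt$. It also needs R5): the adversary decides which of two equal-weight boundary edges at a vertex is taken first. Without R5) you cannot assume the agent enters through $\ualg$ rather than $\uopt$. Nor can you force it, in the second stopping case, to visit $\uopt$ before using an exit edge. These are precisely the steps that create the factor $3$. Your binary-tree replacement with weight-$0$ edges is the right tool; in the paper it yields R1) and subcubicity simultaneously, and bundling edges yields the remaining restrictions. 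However, you use it only for the degree bound, not to hide endpoint identities.
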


Prior to our work, the best known lower bound was 10/3~\cite{birx}.
While our construction is closely related to that in~\cite{birx}, we provide a refined version that allows for better estimates on the backtracking cost of an agent, that is, the cost incurred when traversing already explored parts of the graph.
Another key ingredient in proving \cref{thm:4-lower-bound} is showing that several restrictions can be imposed on the agent’s behavior without affecting the competitive ratio.
As a byproduct of the restrictions on the agent's behavior, we obtain that certain assumptions can be made about the underlying graph itself without decreasing the competitive ratio. The following result summarizes the most important of these restrictions.

\begin{proposition}
\label{prop:restrictions-intro}
	In the online graph exploration problem, we can assume (any subset of) the following restrictions without affecting the competitive ratio of the problem.
	\begin{itemize}
		\item Upon visiting a new vertex, the agent only learns the unique identifiers and weights of incident edges (but not the identifiers of neighboring vertices).
		\item The graph is subcubic.
		\item The edge weights fulfill the triangle inequality, i.e., $w(e)\leq d(u,v)$ for every edge~$e=\{u,v\}\in E$. Here, $d(\cdot,\cdot)$ denotes the shortest path distance.
		\item The agent is given the total number of vertices beforehand.
	\end{itemize}
\end{proposition}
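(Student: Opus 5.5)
Each restriction only makes the problem easier for the algorithm or restricts the adversary, so it can only decrease the competitive ratio. Hence it suffices to show that none of them decreases it. Concretely, for each restriction (and for any combination of them), I will take a $\rho$-competitive algorithm $\alg'$ for the restricted problem and simulate it, producing a $(\rho+\epsilon)$-competitive algorithm $\alg$ for the unrestricted problem, for arbitrary $\epsilon>0$. The simulations are designed so that they can be composed: each transforms an arbitrary instance $(G,v)$ online into an instance $(G',v')$ satisfying one restriction while preserving the others. The only requirements are $\opt(G')\le(1+\epsilon)\opt(G)$, possibly up to an additive constant, and that every walk of $\alg'$ in $G'$ can be mapped to a walk in $G$ of no larger cost. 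I will first prove the observation that competitive and strict competitive ratios coincide, by a scaling/padding argument: attach many scaled copies so the additive constant becomes negligible. This lets me assume $C=0$ throughout, which makes additive error terms harmless.

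\textbf{Subcubic.} Replace every vertex $u$ of degree $d>3$ by a path (or binary tree) of $d-2$ new vertices joined by edges of weight $0$, and attach each original incident edge to a distinct tree vertex. The result $G'$ is subcubic, and $\opt(G')=\opt(G)$ because contracting zero-weight edges gives a bijection of tour costs in both directions. Online, when $\alg$ first visits $u$ it learns all incident edges, so it can construct the gadget locally and feed $\alg'$ the gadget vertices as $\alg'$ explores them. Moves of $\alg'$ inside a gadget cost nothing, and are simulated by staying at $u$. If one insists on positive weights, weights $\delta$ tiny relative to the minimal positive weight work, at the cost of $\epsilon$.

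\textbf{Triangle inequality.} When an edge $e=\{u,v\}$ is revealed with $w(e)>d(u,v)$, the agent may not yet know $d(u,v)$. I therefore present $\alg'$ with the graph $G'$ in which each edge's weight is replaced by the shortest-path distance in the currently known subgraph, reducing it lazily. The key observation is that an edge violating the triangle inequality is never used by an optimal tour, so $\opt$ is unchanged, and $\alg$ can realize any traversal of $e$ by the corresponding shorter path. The difficulty is that $d(u,v)$ can decrease later, after $\alg'$ has already seen the weight. The cleaner route is to let $\alg$ hide such edges: an edge is revealed to $\alg'$ with weight $w(e)$ when it is first seen, and if a shorter path is discovered later, $\alg$ still traverses $e$ via the shortest known path. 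This keeps the cost no larger than what $\alg'$ pays, and $G'$ with weights set to the true metric closure satisfies the triangle inequality with $\opt(G')=\opt(G)$. I expect reconciling online consistency here to be the main obstacle: the weights shown to $\alg'$ must be those of a fixed graph $G'$, in which an adversary would be bound by the triangle inequality. I would handle this by replacing each edge with a subdivided path whose weight is fixed at the moment of revelation, and by arguing that the overlaid instance still has the same $\opt$ and satisfies the triangle inequality restricted to the revealed structure.

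\textbf{Identifiers of neighbours hidden, and known $n$.} Hiding the neighbours' identifiers only removes information, so the simulation $\alg'$ can simply ignore them. More precisely, I must show that an algorithm which does not use neighbour identifiers loses nothing. This follows by subdividing each edge with a new midpoint vertex, with the weight split as $w(e)/2$ on each half: knowing the identifier of the neighbour (the midpoint) gives no information about the true endpoint, and tours correspond exactly. For a known number of vertices, I pad $G$ with a long zero-weight path to reach any $n$, giving $\opt$ unchanged. Since $\alg$ does not know $n$, it instead uses a doubling/guessing scheme. Alternatively, I can observe that the lower-bound adversaries in the paper fix $n$ in advance, so it suffices that the padding by zero-weight pendant vertices is consistent across all adversary branches, and those branches have the same vertex count. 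Composing these gadgets, checking that each preserves the previous properties (subdivision preserves subcubicity and the triangle inequality, and zero-weight trees preserve the metric), yields the proposition.
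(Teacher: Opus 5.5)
There is a genuine gap. It lies in the gadget you use to hide neighbour identifiers, and again in your proposed fix for the triangle inequality: subdividing edges. Your claim that ``tours correspond exactly'' is false, because the midpoints are new vertices that the online agent \emph{and} the optimal tour must visit. A midpoint $m_e$ is incident only to the two halves of $e$, so every closed tour through it spends at least $w(e)$ on them, giving $\opt(G')\geq\sum_{e}w(e)$. For the unit-weight complete graph this is $\binom{n}{2}$ versus $\opt(G)=n$. In general both costs become dominated by edge covering, so the ratio is not preserved.

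The midpoint also hides nothing. After paying $w(e)/2$, the agent at $m_e$ learns the identifier of the true far endpoint and may turn back. It can therefore discover that two boundary edges lead to the same vertex without exploring that vertex, which is exactly the information the first restriction is meant to withhold. Apart from this, your triangle-inequality paragraph is left unresolved. Its intermediate version is also inconsistent: if the metric-graph algorithm is shown weight $w(e)$, it is not running on the metric closure, so its guarantee does not apply. Finally, your opening sentence has the direction wrong for the first restriction. Hiding identifiers weakens the agent, so simulating the restricted algorithm is trivial there. What must be shown is that a lower-bound adversary against the restricted agent transfers to the unrestricted one.

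The missing idea is to put the gadget on the vertices rather than the edges, as the paper does. Replace \emph{every} vertex by a binary tree of weight-$0$ edges and attach each incident edge to a distinct leaf. This must apply to all vertices, not only those of degree above $3$, since a simulator cannot know a neighbour's degree before visiting it. Gadget vertices are free to visit, so $\opt$ and the agent's cost are unchanged, and the result is subcubic. The neighbour identifiers the agent learns are leaves unique to each edge, so they reveal nothing about which boundary edges share an endpoint. This gives your first two restrictions at once; your subcubic paragraph is essentially this construction.

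With endpoints hidden, the adversary may fix endpoints lazily and replace each boundary edge by a bundle of parallel copies. Suppose the agent traverses a boundary edge $e'$ at $u'$ although another boundary edge $e$ at $u$ has $d(u',u)+w(e)<w(e')$. The adversary then lets $e$ and $e'$ share their endpoint and respawns a copy of $e'$, so such moves are useless. Hence an edge violating the triangle inequality is never traversed while it is a boundary edge and can be assumed absent; this is how the paper obtains the third restriction.

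For the vertex count, your zero-weight padding is the paper's idea, but the doubling/guessing step is unspecified, and restart-type schemes generally lose a constant factor. The paper instead uses padding to show that the strategy told $n=N$ is already $\rho$-competitive on every graph with at most $N$ vertices, so it can be taken independent of $n$. Your fallback is also incorrect: the adversaries in this paper do not fix $n$ in advance. The number of vertices depends on the agent's moves, e.g.\ the index $i$ in the block construction.
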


We hope these insights will be useful for future research on online graph exploration beyond the lower bound construction presented here.

\paragraph*{Related work.}
Several algorithms have been proposed for the online graph exploration problem.
The arguably most natural approach is the \emph{nearest neighbor algorithm}, where, in each step, the agent greedily selects a closest unexplored vertex.
Rosenkrantz, Stearns and Lewis proved that the competitive ratio of this algorithm is~$\Theta(\log n)$~\cite{NN}. Interestingly, the lower bound of $\Omega(\log n)$ is already achieved on unweighted planar graphs~\cite{NNleader} and on trees~\cite{fritsch}.

Another approach is given by \emph{hierarchical depth-first search (hDFS)}, which was introduced by Megow, Mehlhorn, and Schweitzer~\cite{megow}.
This algorithm is constant-competitive on graphs only using a bounded number of different weights and~$\Theta(\log n)$-competitive on general graphs, where the lower bound of $\Omega(\log n)$ is already achieved on a path.
The algorithm \blocking was introduced in~\cite{megow} and builds on the algorithm Shortcut by Kalyanasundaram and Pruhs~\cite{pruhs94}.
In the latter work, the authors showed that the algorithm has competitive ratio at most~16 on planar graphs. In~\cite{megow}, Megow, Mehlhorn, and Schweitzer extended this by proving that \blocking is constant-competitive on bounded-genus graphs, and in~\cite{baligacs}, this is extended even further to graphs excluding a minor.
In~\cite{eberle}, Eberle et al.~revisited the nearest neighbor approach with learning augmentation.

Miyazaki, Morimoto, and Okabe showed that the competitive ratio of online graph exploration is precisely $(1+\sqrt{3})/2$ on cycles and 2 on unweighted graphs~\cite{miyazaki}.
Other graph classes on which the problem has been studied include tadpole graphs \cite{tadpole}, unicyclic graphs~\cite{fritsch,KobayashiLi/24}, and cactus graphs~\cite{fritsch}.

In terms of lower bounds, the first notable result was given in \cite{miyazaki}, where the authors proved that the competitive ratio of online graph exploration on unweighted graphs is at least~2, which matches the trivial upper bound of 2 achieved by DFS.
Dobrev, Královič, and Markou gave a first weighted construction, proving a lower bound of 2.5~\cite{dobrev}. Birx, Disser, Hopp, and Karousatou built on this construction, proving a lower bound of 10/3~\cite{birx}.
In this work, we further improve the lower bound to 4.

There are numerous other variants of graph exploration studied in the literature, such as exploration of directed graphs \cite{albers,deng,fleischer,foerster}, assuming that the agent has limited memory~\cite{fraigniaud05-memory,reingold} but can use pebbles~\cite{disser-pebbles}, or exploration with a team of agents~\cite{Cosson24,CossonMas24,dereniowski,DisserHackfeldKlimm/19,disserlower}.%\cite{AkkerBuchinFoerster/24,Cosson24,CossonMas24,dereniowski,DisserHackfeldKlimm/19,disserlower}.

\section{Equivalent problems}\label{sec:equivalent-problems}

In this section, we prove that we can impose several restrictions on the agent's behavior that will turn out to be useful for our proof of \cref{thm:4-lower-bound}.
First, we introduce some notation used throughout the paper.
During the course of exploration, we say that a vertex is \emph{explored} if it has been visited by the agent and it is \emph{learned} if one of its neighbors or the vertex itself is explored.
A \emph{boundary edge} is an edge with one explored and one unexplored endpoint.
By convention, we denote boundary edges by $e=(u,v)$, where $u$ is explored and $v$ is unexplored.
Otherwise, we denote edges by $e=\{u,v\}$ as usual for undirected graphs.
For two learned vertices $u$ and $v$, we denote by $d(u,v)$ the length of a shortest path from~$u$ to $v$ whose internal vertices are explored. In other words, this is the smallest upper bound on the distance that the agent is aware of.

\subsection{The adversary model}

It is often useful to view an online optimization problem as a game: one player selects an instance and reveals it piece by piece to the other player, who must make online decisions. In this setting, the latter player is the online algorithm, and the first player is referred to as the \emph{adversary}.
For deterministic online algorithms, it does not matter whether the adversary chooses the entire instance beforehand and reveals it piece by piece, or whether the adversary adapts the unrevealed parts based on the algorithm’s decisions: By definition, if a deterministic algorithm is presented with the same instance $I$ twice, it will make the same decisions in both runs. If there exists another instance~$I'$ that differs only in parts not yet revealed, the algorithm cannot distinguish between $I$ and $I'$, and thus behaves identically on both. Since a\xspace$\rho$-competitive algorithm must be $\rho$-competitive on every instance, the adversary can choose whether the algorithm is operating on $I$ or $I'$ once the distinction becomes relevant.

Finding strategies for the adversary can be seen as a ``dual'' online problem, called the \emph{adversary problem}: Here, we have to construct an instance for the given online problem, while receiving information about the algorithm's decisions over time in an online fashion, without being able to modify the parts of the instance already revealed.
The objective is to maximize the ratio~$\alg(I)/\opt(I)$, where~$I$ is the instance constructed by the adversary.
It is immediate that the optimal value achievable in the adversary problem coincides with the strict competitive ratio of the online problem. 
Importantly, note that this model of an adaptive adversary is only applicable when considering deterministic algorithms.

In the context of online graph exploration, the adversary problem can be stated as follows:
We have to construct a weighted graph $G$ adaptively and the counterplayer is an agent that, in each step, moves to a new unexplored vertex.
Once a vertex $v$ is explored, we have to irrevocably determine the identifiers of its adjacent vertices as well as the identifiers and weights of the corresponding edges.
The objective is to maximize the ratio of the total distance traveled by the agent and the length of a shortest TSP tour.

Throughout this paper, we will often use this adversarial perspective to prove lower bounds or equivalences between online problems.
In particular, it allows us to show the following.

\begin{restatable}{observation}{observationcompetitiveratio}\label{obs:exploration-competitive-ratio}
	Let $\rho\in \R_{\geq 1}$.
	There is a $\rho$-competitive algorithm for online graph exploration if and only if there is a strictly $\rho$-competitive algorithm.
\end{restatable}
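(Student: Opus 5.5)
The direction ``strictly $\rho$-competitive implies $\rho$-competitive'' is immediate: take $C=0$. For the converse, I will use the adversary view. If there is no strictly $\rho$-competitive algorithm, then there is no $\rho$-competitive algorithm either. So suppose there is no strictly $\rho$-competitive algorithm. Fix any deterministic algorithm \alg and any constant $C\geq 0$. The goal is an instance $(G,v)$ with $\alg(G,v)>\rho\cdot\opt(G)+C$. The idea is a scaling argument: the additive constant becomes negligible once all weights are multiplied by a large factor.

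The first step is to find a bad instance at every scale. For $\lambda>0$, define $\alg_\lambda$: on an unknown graph $G$, it simulates \alg on the scaled graph $\lambda G$ (all weights multiplied by $\lambda$). This is possible online, because $\alg_\lambda$ reveals to the simulated run exactly the information \alg would see, with weights rescaled. Then $\alg_\lambda(G,v)=\alg(\lambda G,v)/\lambda$, since the same walk is taken and costs scale linearly. By assumption $\alg_\lambda$ is not strictly $\rho$-competitive, so there is $(G_\lambda,v_\lambda)$ with $\alg(\lambda G_\lambda,v_\lambda)>\rho\,\lambda\,\opt(G_\lambda)=\rho\,\opt(\lambda G_\lambda)$. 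So at every scale, \alg has a violating instance, but the margin $\alg-\rho\opt$ may be tiny. This is the main obstacle: the scaling must be applied to a fixed bad instance with a fixed positive gap, not to a fresh one at each scale.

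To get that, I will use that \alg itself is not strictly $\rho$-competitive. There is then $(G,v)$ with $\delta:=\alg(G,v)-\rho\,\opt(G)>0$. (If $\opt(G)=0$, all weights on a tour are zero and $\alg>0$, so the same argument works.) Now consider $\lambda G$ for $\lambda>C/\delta$. If \alg behaved on $\lambda G$ as the scaled copy of its behaviour on $G$, we would get $\alg(\lambda G,v)-\rho\,\opt(\lambda G)=\lambda\delta>C$. But \alg need not be scale-invariant. This is why the scaled simulations above are needed: apply ``not strictly $\rho$-competitive'' to the algorithm $\alg_\lambda$ rather than to \alg. It yields $(G_\lambda,v_\lambda)$ with $\alg(\lambda G_\lambda,v_\lambda)-\rho\,\opt(\lambda G_\lambda)=\lambda\,\delta_\lambda$ with $\delta_\lambda>0$, and $\delta_\lambda$ still depends on $\lambda$.

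To remove this dependence, I will use an amplification gadget: concatenate many copies of bad sub-instances so that the additive gap adds up. Attach to the start vertex $k$ disjoint gadgets, each connected by a zero-weight edge. The adversary makes each gadget a bad instance for the algorithm ``\alg restricted to that gadget given the history,'' which is again a deterministic online algorithm and so has a violating instance. Zero-weight connectors let \opt join the gadget tours at no extra cost, so $\opt$ is the sum of the gadget optima. \alg pays at least its cost inside each gadget, obtained by shortcutting its walk inside the gadget, which needs zero-weight connections or the triangle inequality. The total gap is then a sum of $k$ positive terms. Choosing each gadget's scale adaptively, large enough that its positive gap is at least $1$, gives a gap of at least $k>C$ once $k>C$. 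Verifying that the per-gadget projection of \alg's walk is a valid online algorithm on that gadget is the step I expect to need the most care.
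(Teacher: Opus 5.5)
Your amplification gadget is exactly the paper's construction. The paper attaches $M$ copies to the start vertex by weight-$0$ edges, runs the adversary independently in each copy, and then uses one common $\epsilon$ for every copy to get a total gap of $M\epsilon$. The genuine gap in your proposal is the step that is supposed to justify such a uniform bound: choosing each gadget's scale ``large enough that its positive gap is at least $1$''. Nothing in the hypothesis supports this.

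For the algorithm $\mathcal{B}$ induced on a gadget, applying the hypothesis to $\mathcal{B}_\lambda$ gives an instance with gap $\lambda\delta_\lambda$, with no control over $\delta_\lambda$. This is exactly the obstruction you pointed out a paragraph earlier. Placing many gadgets side by side does not remove it, because the induced algorithm changes from gadget to gadget and the gaps may decay. Concretely, suppose some $\mathcal{B}$ were $\rho$-competitive with additive constant $D_0>0$. An algorithm that runs (in your notation) $\mathcal{B}_{2^i}$ inside the $i$-th gadget it enters has gap at most $D_0 2^{-i}$ there, whatever instance and scale the adversary chooses. Its total gap therefore stays below $D_0$ for any number of gadgets. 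So the amplification is consistent with precisely the scenario you must rule out: a $\rho$-competitive algorithm exists but no strictly $\rho$-competitive one does. The ``gap $\geq 1$'' step is as hard as the statement itself. The step you flagged as delicate, that the projection onto one gadget is a valid online algorithm, is actually harmless.

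The missing idea is a limit (compactness) argument, which proves the nontrivial direction directly.
\begin{itemize}
\item If $\alg(G,v)\le\rho\,\opt(G)+C$ for all $(G,v)$, then your scaled algorithm $\alg_\lambda$ satisfies the same bound with constant $C/\lambda$. Hence for every $\eta>0$ the set $S_\eta$ of algorithms with additive constant $\eta$ is nonempty.
\item Regard a deterministic algorithm as a point of $\prod_h D(h)$. We may assume it moves along shortest known paths, since this never increases cost. Here $h$ ranges over observation histories and $D(h)$ is the finite set of learned unexplored vertices it may head to next. This space is compact by Tychonoff.
\item The run on a fixed $(G,v)$ depends on at most $|V(G)|$ coordinates. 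So each constraint $\alg(G,v)\le\rho\,\opt(G)+\eta$ defines a clopen set, and $S_\eta$ is closed.
\item The $S_\eta$ are nested, nonempty and closed in a compact space, so $\bigcap_{\eta>0}S_\eta\neq\emptyset$. Any algorithm in this intersection is strictly $\rho$-competitive.
\end{itemize}
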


\begin{proof}
	By definition, every strictly $\rho$-competitive algorithm is also $\rho$-competitive.
	For the other direction, assume that there is no strictly $\rho$-competitive algorithm. 
	Then there exists an adversarial strategy $\cal A$ that constructs, for every exploration algorithm \alg, a graph~$G$ with starting vertex~$v$, such that $\alg(G,v)\geq \rho\cdot  \opt(G) + \epsilon$ for some $\epsilon>0$.
	Consider the following adversarial strategy: Let the starting vertex~$v$ be adjacent to $M$ vertices $v_1, \dots, v_M$ via edges of weight~0. Treat each $v_i$ as the starting vertex of a separate exploration instance and, for each, apply strategy~$\cal A$.
	Let~$G_1, \dots, G_M$ denote the resulting graphs.
	Note that they are not necessarily equal because the algorithm might choose to follow a different strategy when facing the same situation multiple times.
	However, the adversary is able to build a suitable graph for every possible change of strategy.	
	
	Observe that we can treat these as $M$ instances of exploration because during exploration of~$G_i$, the agent does not gain any information about $G_j$ for $j\neq i$.
	We charge the cost incurred in graph~$G_i$ on the $i$-th problem and let $\alg_i$ denote the strategy followed in graph $G_i$. %Switching between the problems is possible at zero cost and the agent does not gain any information on $G_i$ during the exploration of another graph.		
	We obtain overall a graph~$G$ with
	\begin{equation*}
		\alg(G,v)=\sum\limits_{i=1}^M \alg_i(G_i,v_i)\geq M\epsilon + \rho\cdot \sum\limits_{i=1}^M \opt(G_i)=\rho \cdot \opt(G)+M\epsilon.
	\end{equation*}
	Letting $M\to \infty$, this shows that no $\rho$-competitive algorithm exists.
 \end{proof}

In the remainder of the paper, we therefore use the competitive ratio of the problem and the strict competitive ratio of the problem interchangeably.
In the next subsection, we further make use of the adversary model to prove that some variants of the online graph exploration problem are equivalent.

\subsection{The restricted online graph exploration problem}

We prove that we can impose several restrictions on the agent's behavior without affecting the competitive ratio of the graph exploration problem. This will later be useful for our lower bound construction.
In addition, these restrictions can serve as a ``sanity check'' when developing new algorithms: A sensible algorithm should not rely on any information or capabilities that we restrict in the following.

More precisely, consider the online graph exploration problem, where the agent is additionally restricted, and the adversary is relaxed, as follows.
\begin{enumerate}[label=R\arabic*)]
	\item Upon visiting a new vertex, the agent only learns the unique identifiers and weights of incident edges (but not the identifiers of neighboring vertices).\label{property:exploration-restriction-endpoint}
	\item The graph may have parallel edges.\label{property:exploration-restriction-parallel}
	\item When the agent traverses a boundary edge $e=(u,v)$, a new boundary edge~$e'=(u,v')$ of weight~$w(e)$ may appear incident to $u$.\label{property:exploration-restriction-newedges}
	\item If there are two boundary edges $e=(u,v)$ and $e''=(u',v')$ with $d(u',v)< w(e'')$, the agent must not traverse edge $e''$. \label{property:exploration-restriction-triangle}
	\item If there are two boundary edges $e=(u,v),e'=(u,v')$ incident to the same vertex $u$ with $w(e)=w(e')$, the adversary may decide which of the two edges the agent traverses first.\label{property:exploration-restriction-nochoice}
\end{enumerate}
We call this version the \emph{restricted online graph exploration problem}.
Note in the above that none of these restrictions apply to the agent in the offline optimum, e.g., in the situation R5), the offline agent may traverse boundary edges in another order.

Next, we observe that these adaptations do not make the exploration problem harder in the following sense.
While some of the next result is folklore, we provide a proof for the sake of completeness.
For convenience, we treat any super-constant competitive ratio, i.e., depending on some graph parameters such as the number of vertices, as infinity.

\begin{figure}[t]
	\centering
	\begin{tikzpicture}[scale=0.3]
		\tikzset{every node/.style={font=\footnotesize, inner sep=0pt, minimum size=3.5mm}}
		\node (A) at (0,5) [circle, draw] {$a$};
		\node (B) at (5,0) [circle, draw] {$b$};
		\node (C) at (10,5) [circle, draw] {$c$};
		\node (D) at (5,10) [circle, draw] {$d$};
		\draw (A) to node [left, xshift=3pt, yshift=-5pt] {10}(B) to node [right, xshift=-1pt, yshift=-5pt] {5}(C) to node [right, xshift=-1pt, yshift=3pt] {3} (D) to node [right, xshift=1pt] {7}(B);
		\draw (A) to [bend left=20] node [left, xshift=3pt, yshift=5pt] {5}(D);
		\draw (A) to [bend right=20] node [right, xshift=0pt, yshift=-3pt] {5}(D);
	\end{tikzpicture}
	\hspace{10mm}
	\begin{tikzpicture}[scale=0.3]
		\tikzset{every node/.style={font=\footnotesize, inner sep=0pt, minimum size=1mm, circle, draw}}
		\node (A) at (0,5) [circle, draw, minimum size=3.5mm] {$a$};
		\node (a1) at (1, 4.5) {};
		\node (a2) at (1, 5.5) {};
		\node (a11) at (1.5, 4.25) {};
		\node (a12) at (1.5, 4.75) {};
		\node (a21) at (1.5, 5.25) {};
		\node (a22) at (1.5, 5.75) {};
		\draw (a1) to (A) to (a2);
		\draw (a11) to (a1) to (a12);
		\draw (a21) to (a2) to (a22);
		\node (C) at (10,5) [circle, draw, minimum size=3.5mm] {$c$};
		\node (c1) at (9, 4.5) {};
		\node (c2) at (9, 5.5) {};
		\node (c11) at (8.5, 4.25) {};
		\node (c12) at (8.5, 4.75) {};
		\node (c21) at (8.5, 5.25) {};
		\node (c22) at (8.5, 5.75) {};
		\draw (c1) to (C) to (c2);
		\draw (c11) to (c1) to (c12);
		\draw (c21) to (c2) to (c22);
		
		\node (B) at (5,0) [circle, draw, minimum size=3.5mm] {$b$};
		\node (B1) at (5.5, 1) {};
		\node (B2) at (4.5, 1) {};
		\node (B11) at (5.75, 1.5) {};
		\node (B12) at (5.25, 1.5) {};
		\node (B21) at (4.75, 1.5) {};
		\node (B22) at (4.25, 1.5) {};
		\draw (B1) to (B) to (B2);
		\draw (B11) to (B1) to (B12);
		\draw (B21) to (B2) to (B22);

		\node (D) at (5,10) [circle, draw, minimum size=3.5mm] {$d$};
		\node (D1) at (5.5, 9) {};
		\node (D2) at (4.5, 9) {};
		\node (D11) at (5.75, 8.5) {};
		\node (D12) at (5.25, 8.5) {};
		\node (D21) at (4.75, 8.5) {};
		\node (D22) at (4.25, 8.5) {};
		\draw (D1) to (D) to (D2);
		\draw (D11) to (D1) to (D12);
		\draw (D21) to (D2) to (D22);
		
		\draw (a22) to node [left, draw=none, yshift=2pt] {5} (D22);
		\draw (a21) to node [right, draw=none, yshift=-3pt] {5} (D21);
		\draw (a11) to node [left, draw=none, yshift=-3pt] {10} (B22);
		\draw (B21) to node [right, draw=none, yshift=0pt] {7} (D12);
		\draw (B11) to node [right, draw=none, yshift=-2pt] {5} (c11);
		\draw (c22) to node [right, draw=none, yshift=2pt] {3} (D11);

	\end{tikzpicture}
	\caption{Construction for proving that we can assume R1) and R2). 
		All edges without label have weight 0.
		We have replaced every vertex by a binary tree.
		If, in the left graph, vertex $d$ is unexplored, the agent knows that the boundary edges of weight 3 and 7 lead to the same vertex $d$. 
		In the corresponding situation in the right graph, the agent does not have this information.
	}\label{fig:equivalence-restricted-exploration}
\end{figure}

\begin{restatable}{observation}{rogep}\label{obs:exploration-restrictions}
	The competitive ratio of online graph exploration is equal to the competitive ratio of restricted online graph exploration.
\end{restatable}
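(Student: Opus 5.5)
One direction is immediate: an algorithm for the unrestricted problem can be run on restricted instances, but R1)–R5) only remove information and choices from the agent and give the adversary more freedom. So the competitive ratio of the restricted problem is at least that of the original. The substance is the converse. Given a $\rho$-competitive algorithm \alg for restricted exploration, I would build a $\rho$-competitive algorithm for ordinary exploration; equivalently, by the adversary viewpoint and \cref{obs:exploration-competitive-ratio}, I would show that every adversary strategy for the restricted problem can be simulated by an adversary strategy for the original problem with the same ratio, up to arbitrarily small loss. I would handle the restrictions one at a time, or in small groups, by graph transformations that preserve $\opt$ and turn any original-problem agent into a restricted-problem agent of no larger cost.

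\textbf{R1) and R2).} Following \cref{fig:equivalence-restricted-exploration}, I would replace every vertex $v$ by a binary tree of weight-$0$ edges with one leaf per incident edge, and attach each original edge, including parallel ones, to its own leaf pair. The result is a simple graph with the same $\opt$. When an agent explores it, learning a leaf identifier reveals nothing about which original vertex the edge leads to until the tree is entered. An agent for this instance therefore induces, at equal cost, an agent for the multigraph with R1). Conversely, an adversary for R1)/R2) yields an adversary for the simple graph. This also gives subcubicity.

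\textbf{R3).} A boundary edge $(u,v')$ of weight $w(e)$ appearing after $e$ is traversed is simulated by having the binary tree at $u$ contain extra, not-yet-revealed leaves. Since the agent in the simple graph does not know the degree in advance, the adversary can decide to grow the tree further when the agent first explores a tree vertex.

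\textbf{R5).} Two equal-weight boundary edges at $u$ are indistinguishable to the agent except by identifier. The adversary can therefore choose, after seeing which one the agent takes, which endpoint structure sits behind it. Deterministic algorithms cannot exploit identifiers, since the adversary assigns the subgraphs after the choice.

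\textbf{R4) is the main obstacle.} I would argue that forbidding the traversal of $e''=(u',v')$ when $d(u',v)<w(e'')$ for some other boundary edge $e=(u,v)$ costs the agent nothing. To do this, I would modify the instance so that the unknown endpoint of $e''$ can be taken to lie behind $e$. An original agent traversing $e''$ is simulated by walking from $u'$ to $u$ and across $e$, at cost $d(u',v) < w(e'')$, and the adversary reveals that the region reached is what $e''$ was supposed to lead to. Via R1), the endpoint identity is hidden, which makes this consistent.

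The delicate part is checking that the replacement keeps $\opt$ unchanged, or at most decreases it, while the online cost does not increase. The resulting simulation also has to remain consistent across repeated applications. I would organize this as an induction over the agent's steps, maintaining a correspondence map between the simulated and the real explored graphs. I would verify that each step of the transformed agent costs at most the corresponding step of the original, so the ratio carries over. Finally, $\opt$ is compared via the shortcut metric, and the triangle-inequality claim of \cref{prop:restrictions-intro} follows from the same argument.
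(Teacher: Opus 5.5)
\textbf{R3) does not work as you propose, and R4) depends on it.} Your handling of R1), R2) and R5) is the paper's argument: weight-$0$ binary-tree gadgets hide endpoint identities, and the adversary picks which equal-weight edge was taken. The mechanism for R3) fails, though. Every gadget tree $G_u$ has only weight-$0$ edges, so an ordinary agent can explore all of $G_u$ for free before it traverses $e$. In the original problem the incident edges of explored vertices are then fixed irrevocably, so no edge can appear at $u$ afterwards. Letting the tree keep growing does not help. The adversary must eventually commit to a finite graph, and an agent that exhausts zero-cost regions before every move will by then have seen all edges at $u$. Your remark that the agent ``does not know the degree in advance'' holds only until it spends zero cost to find out. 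The missing idea is that any edge R3) will later ``create'' must already be visible as a boundary edge whose endpoint is still undetermined. The paper therefore replaces each boundary edge $e=(u,\cdot)$ of the restricted adversary by a \emph{bundle} of sufficiently many boundary edges of weight $w(e)$ at $u$. By R1) the agent cannot tell whether these are parallel copies or lead to distinct vertices. After one is traversed, the adversary either routes all of them to the same endpoint (harmless parallel edges) or keeps the rest as boundary edges, which is exactly an application of R3).

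The same ingredient is missing from your R4) step. Routing the offending edge $e''=(u',\cdot)$ to the endpoint $v$ of $e$ is the right first move, but on its own it breaks the simulation. In the restricted instance, $e''$ is still an untraversed boundary edge at $u'$ whose endpoint the restricted adversary may later want to place elsewhere; in your graph that edge has been used up. The paper therefore also lets a fresh boundary edge of weight $w(e'')$ appear at $u'$ via R3), i.e.\ the bundle supplies another copy. The agent is then in exactly the knowledge state it would have after traversing $e$, having paid more, and the extra known edge $\{u',v\}$ is useless because it is longer than an existing $u'$--$v$ path. Your correspondence map and induction would have to provide precisely this replenishment, which is unavailable without a working R3). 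Separately, your first two sentences have the directions reversed. The easy inequality comes from running an algorithm for the \emph{restricted} problem on ordinary instances. The substantive direction is turning an ordinary agent into a restricted one, which your adversary formulation states correctly.
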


\begin{proof}
	It is immediate that a $\rho$-competitive algorithm for the restricted online graph exploration problem is also $\rho$-competitive for online graph exploration.
	For the other direction, assume there exists an adversarial strategy $\cal A$, constructing, for every algorithm for the restricted problem, a multigraph $G$ with starting vertex $v$ such that~$\alg(G,v) > \rho \cdot \opt(G)$. 
	We give an adversarial strategy that achieves the same lower bound of $\rho$ in the classical online graph exploration setting. %; that is, for every non-restricted algorithm \alg, there exists a graph $G$ with starting vertex $v$, such that~$\alg(G,v) > \rho \cdot \opt(G)$.
	For this, we construct a graph that mimics the multigraph constructed by $\cal A$ and argue that, in this graph, any movement of an agent for exploration mimics a movement of some agent for restricted exploration.

	First, we argue that R1) and R2) can be assumed without loss of generality.
	The main idea is to replace each vertex $v$ that $\cal A$ introduces by a suitable graph~$G_v$ called a \emph{gadget}, in which we set all edge weights to 0, and connect all edges incident to~$v$ to different vertices of $G_v$ (see \cref{fig:equivalence-restricted-exploration}). For instance, we can set $G_v$ to be a binary tree with at least $d_v$ leaves, where $d_v$ is the degree of~$v$, and connect the edges to its leaves.
	
	From the perspective of the induced metric space of the resulting graph, all vertices in $G_v$ represent essentially the same vertex $v$ because their pairwise distances are 0.
	In particular, we can assume that, if the agent  explores any vertex of $G_v$, it immediately explores all vertices of $G_v$ and thus, learns about all edges incident to vertices of $G_v$.
	Moreover, this construction does not affect the offline optimum cost.
	
	The key difference is the following: In the resulting graph, the agent cannot tell whether two boundary edges lead to the same gadget $G_v$. In other words, learning the identifiers of the endpoints of boundary edges effectively reduces to learning only the identifiers of the boundary edges themselves.
	Moreover, parallel edges between vertices $u$ and $v$ in the graph constructed by $\cal A$ can be modeled as multiple edges between leaves of $G_u$ and $G_v$.
	This shows that we can assume without loss of generality R1) and R2).
	We argue that the remaining properties can also be assumed without loss of generality using that we have already established this for R1) and R2).
	For the remainder of the proof, we denote boundary edges by  $e=(u, \cdot )$ and, once the agent decides to traverse it, the adversary chooses its endpoint.
	
	The statement that we can assume R5) now immediately follows from R1): both edges have the form $e=(u,\cdot)$ of the same weight, so the agent cannot distinguish between them.
	
	For R3), we use the following construction: for every boundary edge $e=(u,\cdot)$ introduced by $\cal A$, we instead introduce a sufficiently large number of boundary edges of weight $w(e)$, each incident to~$u$, which we call a \emph{bundle}.	
	If the agent traverses $e$, this corresponds to traversing one of the edges of the corresponding bundle.
	Upon the traversal of an edge $e=(u,v)$ in the bundle, the adversary can decide whether all edges of the bundle lead to vertex $v$, or only $e$ leads to vertex $v$ and all other edges of the bundle lead to other vertices with not yet known endpoints, i.e., they remain boundary edges.
	Note that the latter corresponds to the capability of $\cal A$ in R3).

	For property R4), if the agent traverses a boundary edge $e'=(u',\cdot )$ even though a boundary edge~$e=(u,\cdot)$ exists with $d(u',u)+w(e)<w(e')$,
	the adversary lets~$e'$ and $e$ have the same endpoint $v$ and it introduces a new boundary edge incident to~$u'$ of weight $w(e')$ (using R3).
	This corresponds to forcing the agent not to traverse~$e'$: If the agent decides otherwise, the outcome remains the same as if it had chosen to
	traverse $e$, but it incurs a higher cost.
 \end{proof}

The main idea for the proof was to replace each vertex by a binary tree and connect the edges to its leaves (see \cref{fig:equivalence-restricted-exploration}).
If a graph class is closed under this operation, we say that it is \emph{closed under tree replacement}.
This property holds, for example, for the classes of all graphs, planar graphs, bounded-genus graphs, graphs excluding a non-planar minor, and graphs only using a bounded number of different edge weights including~0.
Building on the last observation, we obtain equivalence results for further variants of the problem, which we summarize in the following.

\begin{restatable}{corollary}{corequiv}\label{cor:exploration-restrictions}
	The competitive ratios of the following problems coincide.
	\begin{enumerate}[label=\alph*)]
		\item online graph exploration,
		\item restricted online graph exploration,
		\item restricted and classical online graph exploration on subcubic graphs,
		\item restricted and classical online graph exploration on graphs fulfilling the triangle inequality,
		\item each of the above problems with the relaxation that the agent is given the total number of vertices at the beginning of exploration.
	\end{enumerate}
	The equivalence still holds when restricting all of the problems to the same graph class closed under tree replacement.
\end{restatable}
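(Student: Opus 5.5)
The plan is to establish a cycle of inequalities between the competitive ratios. By \cref{obs:exploration-restrictions}, a) and b) coincide. In each variant the agent's power only decreases along a chain of restrictions, so each restricted or subcubic or triangle-inequality version is \emph{at most} as hard as a less restricted one. Concretely, the ratio of the problem on subcubic graphs, or on graphs with the triangle inequality, is at most the ratio on general graphs, since an algorithm for all graphs works on these subclasses. Likewise, giving the agent the number of vertices can only help. It therefore suffices to show the reverse direction: an adversarial strategy for restricted exploration on general graphs, achieving ratio $\rho$, can be turned into one that simultaneously produces subcubic graphs with the triangle inequality and survives the agent knowing $n$.

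For subcubicity I will reuse the gadget from the proof of \cref{obs:exploration-restrictions}. Each vertex $v$ is replaced by a binary tree with weight-$0$ edges, and the incident edges are attached to distinct leaves. Every tree vertex then has degree at most $3$: internal vertices have a parent and two children, leaves have a parent and one external edge. The root has two children, and the root of the start gadget needs no parent. Since the gadget preserves the metric, OPT is unchanged. Exactly as before, the agent's behaviour on the gadget graph simulates a restricted agent. The same argument covers classical exploration on subcubic graphs.

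For the triangle inequality I will use R4). Under R4) the agent never traverses a boundary edge $e''=(u',\cdot)$ with $d(u',v)<w(e'')$ for some known boundary endpoint $v$. The adversary can then, at the end, replace any edge weight $w(e)$ by the true shortest-path distance between its endpoints. Lowering weights does not increase OPT. The agent's cost only decreases on edges it traversed while they were \emph{not} shortest paths, and I expect R4) to prevent exactly this. The main obstacle lies here: an edge may become non-shortest only through vertices discovered \emph{after} its traversal. I would handle this by having the adversary commit to weights that are already metric at every step. When a new vertex is revealed, its incident edges get weights no smaller than what the triangle inequality forces. If a conflict would arise, the adversary uses R3) to redirect the edge to a fresh vertex instead, as in the R4) part of the previous proof. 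Backtracking over explored edges can be assumed to use shortest paths, so metric closure does not change the agent's cost.

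For e), the number of vertices cannot help: the adversary pads the final graph with a long path of weight-$0$ edges, each vertex of degree at most $2$, attached to the start gadget. It announces a total $n$ large enough to dominate any graph its strategy could build. If the strategy's graphs are not bounded in size, I will first truncate the strategy, since the ratio is approached by finitely many steps up to $\epsilon$, and then take the limit. The padding affects neither OPT nor the ratio, since visiting these vertices is free. All constructions (tree replacement and padding with a path) stay within any graph class closed under tree replacement, which gives the final sentence of the corollary.
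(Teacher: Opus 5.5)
Your treatment of a)--c) matches the paper: a) and b) are \cref{obs:exploration-restrictions}, and the binary-tree gadgets make the constructed graphs subcubic without changing the metric. The genuine gap is in part d).

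Replacing each weight $w(e)$ by the final shortest-path distance ``at the end'' is not a legal adversary move. The weight $w(e)$ was revealed irrevocably when $u$ was explored, and the agent's decisions depended on it. The run you have is therefore not a run on the reweighted graph. Lowering $w(e)$ to $d(u,v)$ can even remove the R4) obstruction and make $e$ an attractive boundary edge, so the agent would behave differently.

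Your proposed repair is to commit to ``metric'' weights at every step and redirect conflicting edges to fresh vertices via R3). This changes the instance the given strategy builds, and you give no argument that the ratio survives. A fresh vertex must also be visited by the offline tour, and the agent now sees different weights than in the original game. Also, the R4) part of \cref{obs:exploration-restrictions} does the opposite of redirecting: it sends the expensive edge to the \emph{same} endpoint as the cheap one, which itself creates a triangle-inequality violation.

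The missing idea is that the obstacle you identify does not exist, because of R4). Let $e=\{u,v\}$ satisfy $w(e)>d(u,v)$ in the final graph, and suppose $e$ is currently a boundary edge with $u$ explored and $v$ unexplored. On a shortest $u$--$v$ path of the final graph, let $x$ be the first unexplored vertex and $y$ its predecessor. Then $(y,x)$ is a boundary edge the agent already sees. The part of the path up to $x$ has only explored internal vertices, so the agent's known distance from $u$ to $x$ is at most $d(u,v)<w(e)$. Hence R4) forbids traversing $e$ at every moment it is a boundary edge, whether or not the vertices of the shortcut have been discovered yet. The paper concludes that such edges are never used to explore and can be omitted. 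Omitting them leaves the shortest-path metric, and hence \opt, unchanged, so no reweighting or lookahead is needed.

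A smaller point on e): attach the zero-weight padding path at the last explored vertex, as the paper does, not at the start gadget. At the start, the agent can walk the whole padding path for free before anything else happens, forcing you to fix its length before knowing the size of the rest. At the end, you know exactly how many vertices are missing. The paper uses this to argue from the algorithm side: an $\alg_N$ that is $\rho$-competitive on $N$-vertex graphs is also $\rho$-competitive on all graphs with at most $N$ vertices.
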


\begin{proof}
	Equivalence of a) and b) is precisely \cref{obs:exploration-restrictions} and c) follows from the fact that the graphs constructed in the proof of \cref{obs:exploration-restrictions} are subcubic.
	For d), assume that an input graph contains an edge $e=\{u,v\}$ not fulfilling the triangle inequality, i.e., $w(e)>d(u,v)$, where the distance is measured here in the final graph.
	As long as $e$ is a boundary edge, there exists another boundary edge on the shortest path from $u$ to $v$ that is cheaper for the agent to traverse. Due to property \ref{property:exploration-restriction-triangle}, $e$ is never traversed so we can assume that the graph does not contain such edges.
	
	For part e), let $\alg_N$ be the strategy that the agent follows when given the information that the final graph consists of~$N$ vertices and assume that $\alg_N$ is $\rho$-competitive on every graph on~$N$ vertices. Then $\alg_N$ is also $\rho$-competitive on every graph $G$ on $n\leq N$ vertices:
	The adversary can attach to the last explored vertex of $G$ a path of $N-n$ vertices using edges of weight 0 so that the agent technically
	operates on a graph on $N$ vertices, however, its incurred cost on $G$ equals the incurred
	cost in the modified graph.
	This shows that $\alg_N$ is also $\rho$-competitive on every graph on $n\leq N$ vertices.
	For this reason, we can assume that $\alg_{N-1}=\alg_N$ for every~$N\in \N_{\geq 2}$. We obtain that we can assume that the agent's strategy is independent of the number of vertices, even when this information is given beforehand.
	
	The last statement follows immediately from the fact that such a graph class is closed under the constructions in \cref{obs:exploration-restrictions} and under attaching paths.
 \end{proof}

This completes the proof of \cref{prop:restrictions-intro}, which follows from \cref{obs:exploration-restrictions} and \cref{cor:exploration-restrictions}.

We note that the results in this subsection are not applicable when proving super-constant bounds on the competitive ratio. For example, if one aims to prove that an algorithm has competitive ratio $f(n)$ for some function depending on $n$, one loses generality when assuming the above restrictions.
This is because the constructions in the proof increase the number of vertices.

\section{Overview of the proof of \cref{thm:4-lower-bound}}\label{sec:lower-bound-overview}

To prove \cref{thm:4-lower-bound}, we have to do the following:
We give an adversarial strategy that, given an algorithm \alg fulfilling \ref{property:exploration-restriction-endpoint}-\ref{property:exploration-restriction-nochoice} and~$\epsilon>0$, constructs a planar graph~$G$ with starting vertex $v$ such that $\alg(G, v)/\opt(G) \geq 4 - \epsilon$.
This implies that the strict competitive ratio of the restricted exploration problem on planar graphs is at least 4. 
Using 
\cref{obs:exploration-competitive-ratio} and \cref{cor:exploration-restrictions}, this implies the statement of \cref{thm:4-lower-bound}.

Before giving the adversarial strategy in detail, let us introduce the main ideas behind it.
Fix an algorithm \alg for the restricted online graph exploration problem. When we refer to \emph{the agent}, we mean the agent of this online algorithm.
In contrast, the \emph{offline agent} denotes the agent in an offline optimum solution.
The graph $G$ that we construct consists of subgraphs serving as building blocks, which we refer to as \emph{blocks}.
The agent will be forced to traverse almost every block in order to reach every vertex of the graph.
Note that the cost incurred by the agent to traverse a block for the first time might differ from the cost of a second traversal, as the agent gains information during the first traversal that can be used later.
Roughly speaking, we will arrange blocks such that every block is traversed once by the offline agent and twice by the online agent. Moreover, blocks will be built such that the agent pays three times as much as the offline agent for the first traversal.

We want to emphasize that the strategy of a block structure was already used in previous works on lower bounds for online graph exploration~\cite{birx,dobrev}.
Here, we use the same strategy for arranging the constructed blocks in the graph $G$ as in~\cite{birx}.
The blocks that we define are based on the same ideas as in~\cite{birx,dobrev}, but we give a refined structure that allows for a stronger, and arguably simpler, analysis.

\begin{figure}[t]
	\centering
	\begin{tikzpicture}[scale=0.8, every node/.style={circle, draw, inner sep=0pt, minimum size=4.5mm}]
		\node (A) at (0,0) {};
		\node at (0,0.45) [draw=none] {{\footnotesize $\ualg$}};
		\node at (-3,0.45) [draw=none] {{\footnotesize $\uopt$}};
		\node at (4,0.45) [draw=none] {{\footnotesize $\uex$}};
		\node (B) at (1,0) {{\footnotesize $w_1$}};
		\node (C) at (2,0) {};
		\node (D) at (3,0) {};
		\node (E) at (4,0) {{\footnotesize $w_k$}};
		\node (B1) at (-1,0) {{\footnotesize $v_1$}};
		\node (C1) at (-2,0) {};
		\node (D1) at (-3,0) {{\footnotesize $v_i$}};
		\draw (D1) to node [above, draw=none] {1} (C1);
		\draw (C1) to node [above, draw=none] {1} (B1);
		\draw (B1) to node [above, draw=none] {1} (A);
		\draw (A) to node [above, draw=none] {1} (B);
		\draw (B) to node [above, draw=none] {1} (C);
		\draw (C) to node [above, draw=none] {1} (D);
		\draw (D) to node [above, draw=none] {1} (E);
		\node (entry) at (-5.2,0) {$s$};
		\draw[dashed] (E) to [bend left=22] node [below, draw=none] {$k$} (D1);
		
		\node (T1) at (6, 0.5) {{\footnotesize $t_1$}};
		\node (T2) at (6, -0.5)  {{\footnotesize $t_2$}};
		\draw[thick, red] (E) to node [above, draw=none] {$k$} (T1);
		\draw[very thick, red] (E) to node [below, draw=none] {$k$} (T2);
		
		\draw[thick, blue] (D1) to node [above, draw=none] {$k$} (entry);
		\draw[thick, blue] (entry) to [bend left=45] node [above, draw=none] {$k$} (A);	
		\draw[very thick] (-3.5,-1.3) to (4.5,-1.3) to (4.5,0.9) to (-3.5,0.9) to (-3.5,-1.3);
		
		\draw[decorate,decoration={brace,amplitude=6pt, mirror}] (-3,-1.5) to  node[midway, below,yshift=-6pt, draw=none]{$i$} (-0.10,-1.5);
		\draw[decorate,decoration={brace,amplitude=6pt, mirror}] (0.01,-1.5) to  node[midway, below,yshift=-6pt, draw=none]{$k$} (4,-1.5);
	\end{tikzpicture}
	\caption{An adversarially constructed instance for the block traversal problem.
		The blue edges are entry edges and the red edges are exit edges. The dashed edge is only present if the agent does not travel back to $s$.
		The figure shows a block constructed by~$\mathcal{A}_1$ (defined in \cref{lem:recursive block}) when identifying vertices at distance 0 with each other and omitting parallel edges. }\label{fig:block-level-1}
\end{figure}

We informally describe the idea for the construction of a (single) block: Consider \cref{fig:block-level-1}.
The agent starts at vertex $s$ and we estimate its incurred cost until it traverses a red edge (without requiring it to visit all vertices).
By \ref{property:exploration-restriction-nochoice}, we can assume that the agent starts by exploring $\ualg$, where it finds itself somewhere along a path of edges of weight 1.
It then explores some of this path until the adversary decides to stop the process and we let $w_k$ be the last explored vertex of the path.
The other endpoint is then $v_i$, which is not yet explored, but~$v_{i-1}$ is.
The value of $i$ depends on the algorithm's behavior.
Note that $i\neq 1$ only occurs if the agent explores the path in ``zigzag movements'', that is, if the agent changes direction before reaching an endpoint of the path.
By \ref{property:exploration-restriction-nochoice}, we can assume that the agent has to explore $v_i$ before it can traverse a red edge. 
Therefore, the agent has to travel along the edge of weight~$k$ to $v_i$ and back before traversing a red edge.
In contrast, the offline agent can visit all vertices by following the path $(s, v_i, \dots, v_1, \ualg, w_1, \dots, w_k, t_1$).

When focusing on the red edges only, we observe the following behavior: Once the agent is located at $w_k$, it incurs a cost of $3k$ to traverse a red edge, whereas the offline agent only incurs a cost of $k$ because it has explored $v_i$ before~$w_k$.
Intuitively speaking, this means that the agent pays three times as much for traversing a red edge as the offline optimum.
Next, we recursively replace all edges of weight 1 in this graph by blocks of the same structure (with edge weights scaled down).
This increases the fraction of edges for which the agent pays three times as much as the offline agent.
In the limit, the agent incurs three times the cost of the offline agent for traversing the block.

Importantly, note that, during a second traversal, the agent does not have to use vertices $v_1, \dots, v_i$, which makes the second traversal shorter. However, the cost incurred during the first traversal increases with $i$, i.e., if the agent explores the path in ``zigzag movements''.
Therefore, in the analysis, we will carefully track these values.
This is one of the key improvements over the work in \cite{birx}.
In \cref{sec:block-traversal-problem}, we formally establish the notion of a block, the problem of traversing it, and the key values that need to be tracked. In \cref{sec:block-construction}, we then construct the blocks described above and prove that the agent incurs three times the cost of the offline optimum when traversing them.

Once we have established blocks, we arrange them in a graph as illustrated in \cref{fig:lower-bound-4}, which is already used in~\cite{birx}.
The structure of this graph is motivated by a lower bound construction for the exploration problem on unweighted graphs, which can be used to prove a lower bound of~2 on this class.
In this graph, the agent travels on some walk from $v_\mathrm{s}$ to $v_1$ (\cref{fig:lower-bound-4}), where it encounters three paths of blocks. Since it cannot distinguish between these paths, we may assume that the first path that is completely explored is the one that leads back to $v_\mathrm{s}$.
The agent then has to backtrack to $v_1$ to make further progress. 
This behavior occurs in every cycle of the graph. 
The agent completes visiting all vertices at some position in the last cycle $C^*$. After this, it has to return to the starting position, incurring additional cost. In total, we obtain that it traverses (almost) every block twice.
By contrast, the offline agent traverses the upper half of each cycle first and then, starting from the last cycle, traverses the lower half of each cycle.
It thus traverses every block precisely once.
This construction is formalized in \cref{sec:lower-bound-exploration-final}.

\section{The block traversal problem}\label{sec:block-traversal-problem}

In this section, we formalize the problem of traversing a block, for which we introduce a new online problem, called the \emph{block traversal problem}. Intuitively, it is a variant of online graph exploration, where the task is only to find the exit of a maze instead of exploring all of it.

An instance of the \emph{block traversal problem} (see \cref{fig:block-level-1} for an example) consists of a weighted connected graph~$B$, called a \emph{block}, together with a starting vertex $s$ and two target vertices $t_1, t_2$ (where~$s$, $t_1, t_2$ are not considered to be part of~$B$).
The vertex~$s$ is connected to two different vertices of $B$, called its \emph{algorithmic entry vertex}~$\ualg$ and its \emph{optimal entry vertex}~$\uopt$, via two edges of the same weight and we call these edges the \emph{algorithmic entry edge} and \emph{optimal entry edge}.
The target vertices $t_1$ and~$t_2$ are connected to the same vertex of $B$, called its \emph{exit vertex}~$\uex$ and we call the corresponding edges the \emph{exit edges}.
By slight abuse of notation, by~$B$ we sometimes refer to the instance of the block traversal problem, instead of only the graph.
In the block traversal problem, an exploration agent obeying rules \ref{property:exploration-restriction-endpoint}-\ref{property:exploration-restriction-nochoice} is initially located in $s$ and is tasked to reach~$t_1$ or~$t_2$ while minimizing the total traveled distance.

Later, a block will be part of our graph in the lower bound construction and we will ensure that the agent
needs to traverse almost all blocks in order to reach every vertex.
Note that with the relaxed requirement that blocks only need to be traversed, instead of the requirement that all vertices are explored, we obtain a safe lower bound on the cost incurred by the agent.
However, for the offline optimum, we still require it to visit all vertices to obtain an upper bound on the cost incurred by the offline agent.

As already hinted at in the previous section, we keep track of multiple values for an instance~$B$ of the block traversal problem that turn out to be useful later on.
In our constructions, we will often connect blocks where the exit edges of one block $B$ correspond to the entry edges of the subsequent block $B'$.
We will charge the cost of traversing these on the block $B'$.
For this reason, when studying $B$ in isolation, we never charge the cost of traversing an exit edge in the following values.

\newcommand{\core}{\ensuremath{\textsc{AlgC}}\xspace}

\begin{itemize}
	\item The \emph{algorithm's cost} $\alg_B$ is the cost incurred by the agent for the block traversal problem on $B$, excluding the cost of traversing an exit edge. %More precisely, \alg produces a walk $W$ from $s$ to $t_1$ or $t_2$, where the second to last vertex in $W$ is $\uex$ (this vertex might appear multiple times in~$W$). Let $W'$ be the walk obtained when removing the last vertex from $W$. Then $W'$ is a walk from $s$ to $\uex$ and we set $\alg_B$ to be the length of this walk.	
	\item The \emph{block length} is $l_B:=d(s,\uex)$.
	In other words, this is the cost incurred by the online or offline agent in a second traversal.
	\item The \emph{optimal exploration cost} $\opt_B$ is the length of a shortest walk from $s$ to~$\uex$ that visits all vertices in $B$. In other words, this is the cost incurred by the offline agent.
	\item The \emph{zigzag length} is  $z_B:=\opt_B-l_B$. Intuitively, this is the ``discount'' that the offline or online agent obtains for a second traversal.
	The name is inspired by the fact that, in the instances that we construct, this value is larger if the agent prefers to travel in ``zigzag movements''.
	\item The \emph{algorithm's core cost} $\core_B$ is defined by $\core_B:=\alg_B-z_B$.
\end{itemize}
The definition of the last value is motivated as follows: In our final construction, blocks will be traversed twice by the agent and we will be able to show that during the first traversal, the incurred cost is $3\opt_B+z_B$, and during the second traversal, the incurred cost is $\opt_B-z_B$.
In a similar spirit to potential function arguments, the adversary ``saves'' $z_B$ in the first traversal and charges it only in the second traversal, i.e., the core cost is the cost charged during the first traversal.
As explained in \cref{sec:lower-bound-overview}, our goal is to prove the following result.
While this is closely related to \cite[Theorem 3.1]{birx}, the key difference is that we estimate the core cost instead of the total algorithm cost.

\begin{theorem}
	\label{thm:3-blocks}
	There exists an adversarial strategy that, for every $\epsilon>0$ and every block traversal algorithm \alg, constructs an instance $B$ of the block traversal problem with~$\core_B/\opt_B\geq 3-\epsilon$.
\end{theorem}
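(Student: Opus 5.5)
I would prove \cref{thm:3-blocks} by a recursive family of adversarial strategies $\mathcal{A}_0,\mathcal{A}_1,\mathcal{A}_2,\dots$, where $\mathcal{A}_\ell$ produces a block of ``level'' $\ell$. The aim is an inductive bound of the form
\[
\core_B \;\ge\; c_\ell\,\opt_B \qquad\text{with } c_\ell \to 3 .
\]
The construction and the bound are set up as follows.

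\textbf{Level $0$.} A level-$0$ block is a single edge of weight $1$ (with $\ualg=\uopt=\uex$ up to zero-weight identifications). Here $\alg_B=\opt_B=l_B=1$ and $z_B=0$, so $c_0=1$.

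\textbf{Level $\ell$.} A level-$\ell$ block is the structure of \cref{fig:block-level-1}: a path of unit-scale pieces containing $\ualg$. Each unit edge is replaced by a scaled copy of a level-$(\ell-1)$ block, built adaptively by $\mathcal{A}_{\ell-1}$. The exit edges of one sub-block serve as the entry edges of the next, with \ref{property:exploration-restriction-newedges} and \ref{property:exploration-restriction-nochoice} used to make the two directions indistinguishable.

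\textbf{The adversary's run.} By \ref{property:exploration-restriction-nochoice}, the agent enters at $\ualg$. The adversary keeps extending the path in whichever direction the agent pushes. It stops once the side where the agent currently sits has reached length $k$, with $k$ large. At that point the far explored endpoint is $v_{i-1}$, and the unexplored $v_i$ is joined to $w_k$ by an edge of weight $k$, which is the dashed edge in \cref{fig:block-level-1}. If the agent instead returns to $s$, the edge $(s,v_i)$ plays this role.

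By \ref{property:exploration-restriction-nochoice}, the adversary also dictates that $v_i$ must be explored before a red exit edge can be taken. So the agent pays the weight-$k$ edge twice before exiting, or returns via $s$ at comparable cost.

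\textbf{Bookkeeping for a single block.} I would compute the four quantities for the block.
\begin{itemize}
\item $\opt_B$ is the entry edge, plus the sum of $\opt$ over all sub-blocks traversed in the order $s,v_i,\dots,v_1,\ualg,w_1,\dots,w_k$, plus possibly the $v$-side zigzag discounts.
\item $l_B$ is the length of the direct route $s\to\ualg\to\dots\to w_k$.
\item $z_B=\opt_B-l_B$ is then essentially the entry edge plus the $v$-side path length, minus the zigzag discounts of the sub-blocks.
\end{itemize}

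For the agent's cost, each sub-block that is traversed for the first time contributes at least $\core+z$ by induction. Each sub-block it retraverses contributes at least its $l=\opt-z$. The point of tracking the core cost is that these two $z$ terms cancel across first and second traversals. The walk structure forces the following:
\begin{itemize}
\item the $v$-side sub-blocks are traversed once forward and once back, since the agent has to return to reach $w$;
\item the $w$-side sub-blocks are traversed once;
\item the edge of weight $k$ is traversed twice.
\end{itemize}

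\textbf{The recursion.} Summing gives roughly
\[
\core_B \;\ge\; c_{\ell-1}(\text{unit mass}) + 2k + (\text{entry}) - (\text{lower-order terms}),
\]
against $\opt_B \approx (\text{unit mass}) + k + \text{entry}$. I would choose the scaling so that the unit mass is a vanishing fraction relative to $k$ at each level, or more precisely a fixed fraction $\delta$. This yields a recursion of the form $c_\ell \ge (1-\delta)\,3 + \delta\, c_{\ell-1}$ up to $O(1/k)$ errors. Iterating, $c_\ell$ tends to $3$, and choosing $\ell$ and $k$ large gives the bound $3-\epsilon$.

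\textbf{Main obstacle.} The hardest step is the case analysis of the agent's zigzag behaviour. A larger $i$ raises $z_B$, and the core cost subtracts $z_B$. I must show that the extra forced back-and-forth retraversals on the $v$-side exactly pay for this increase. That means checking that every unit of $v$-side length counts once in $z_B$ and at least twice in $\alg_B$.

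I would attempt this first by a potential-style accounting along the walk, charging each sub-block's first traversal $\core+z$ and later traversals $\opt-z$. I would also handle separately the branch in which the agent returns to $s$ and uses the entry edge, where the entry-edge weight $k$ must substitute for the dashed edge.
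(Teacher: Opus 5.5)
Your construction is essentially the paper's, and the cancellation you describe is exactly how the paper's accounting works: the first traversal is charged $\core+z$ and backtracking is charged $l=\opt-z$. The gap is in the step you treat as a choice of scaling. You cannot make the sub-block mass $M$ (the total $\opt$ of the level-$(d-1)$ sub-blocks) a vanishing fraction of $\opt_B$, nor even a level-independent one.

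\emph{Why $M$ is not a vanishing fraction.} The $w$-side sub-blocks alone have total $\opt$ at least their total length, and that length equals the heavy-edge weight $W$. Taking $W$ much larger than the path does not help. The agent would then reach $v_i$ by walking back along the path, through the partially explored last $v$-side sub-block, at cost comparable to twice the path length, and your $2k$ term disappears.

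\emph{Why $M$ is not a level-independent fraction.} The algorithm controls the zigzag at \emph{every} level, so the $v$-side can be almost as long as the $w$-side. Hence $\opt/l$ of a block can grow by about $\tfrac12$ per level, and a level-$(d-1)$ sub-block (your $\ell-1$) can have $\opt$ about $\frac{d+1}{2}$ times its length. Since $(3W+c_{d-1}M)/(W+M)$ decreases in $M$ when $c_{d-1}<3$, you need an \emph{upper} bound on $M$. The best available is $M\lesssim(d+1)W$, i.e.\ $\delta_d\approx\frac{d+1}{d+2}\to1$, not a fixed $\delta$.

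What is missing is an inductive invariant that bounds $\opt_B$ uniformly over all algorithms. The paper carries $l_B=2k^d$ and $\opt_B\le(d+2)k^d$ (B1, B2) through the induction and obtains
\[
r_d\;\ge\;\frac{3+(d+1)\,r_{d-1}-(3d+4)/k}{d+2+3/k}.
\]
The conclusion still holds, but only because $\prod_{j\le d}\frac{j+1}{j+2}=\frac{2}{d+2}\to0$. So $3-r_d=O(1/d)$, up to error terms that vanish when $k$ is large compared to the level (the paper takes $k\in\Omega(d^2)$). With a fixed-$\delta$ recursion, both this slow convergence and the coupling between $k$ and $d$ are invisible.

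The same invariant is what makes your ``lower-order terms'' actually lower order. The last $v$-side sub-block $B_i'$ is only partially explored, so its full $\opt_{B_i'}=l_{B_i'}+z_{B_i'}$ enters $z_B$ with no matching $\alg_{B_i'}+l_{B_i'}$ in $\alg_B$. This loss is $O(d\,k^{d-1})$ only because of B2.

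Finally, your expression for $z_B$ is wrong. The entry edge occurs in both $\opt_B$ and $l_B$ and cancels, and the sub-block zigzag lengths enter with a plus sign: $z_B\le\sum_{w\text{-side}}z+\sum_{v\text{-side}}(z+l)+k^{d-1}$. Taken literally, your version would remove one heavy edge from $\core_B$ and leave only two heavy-edge terms instead of three.
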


\begin{figure}[t]
	\centering
	\begin{tikzpicture}[scale=1.7, every node/.style={inner sep=0pt, font=\footnotesize}]
		\node (Bex) at (0.5,0) [circle, draw, minimum size=4mm] {$s$};
		\foreach \x in {1, 2.5, 4, 5.5}{
			\node (Bopt) at (\x+0.1,0) [circle, draw, minimum size=2mm, fill, blue] {};
			\node (Balg) at (\x+0.4,0) [circle, draw, minimum size=2mm, fill, blue] {};
			\draw[thick] (Bex) to (Bopt);
			\draw[thick] (Bex) to [bend left] (Balg);
			\node (Bex) at (\x+0.9,0) [circle, draw, minimum size=2mm, fill, red] {};
			\draw[dashed] (Bopt) to (Balg) to (Bex);
			\draw[-] (\x,-0.2) to (\x, 0.2) to (\x+1, 0.2) to (\x+1, -0.2) to (\x, -0.2);
		}
		\node (t1) at (7, 0.2) [circle, draw, minimum size=4mm] {\scriptsize $t_1^*$};
		\node (t2) at (7, -0.2) [circle, draw, minimum size=4mm] {\scriptsize $t_2^*$};
		\draw[thick] (t1) to (Bex) to (t2);
		\node at (1.5, 0.3) {$B_1$};
		\node at (3, 0.3) {$B_2$};
		\node at (4.5, 0.3) {$B_3$};
		\node at (6, 0.3) {$B_4$};
	\end{tikzpicture}
	\caption{A chain of 4 blocks. The construction is obtained by identifying the exit vertex of $B_j$ with the starting vertex of $B_{j+1}$. The algorithmic and optimal entry vertices are depicted in blue, the exit vertices are depicted in red. All other vertices inside the blocks are omitted. The dashed lines indicate that blocks are connected.}\label{fig:chain}
\end{figure}

We will often use that blocks can be connected to each other.
We define a \emph{chain of blocks} (see \cref{fig:chain}) as a sequence of blocks $B_1, \dots, B_i$ where all entry and exit edges have the same weight, by identifying the exit vertex of block $B_j$ with the starting vertex of block~$B_{j+1}$ ($j\in \{1, \dots, i-1\})$.
In particular, we identify the target vertices of block $B_j$ with the algorithmic and optimal entry vertex of block~$B_{j+1}$.
Let~$s$ be the starting vertex of~$B_1$ and $t_1^*, t_2^*$ be the target vertices of the last block $B_i$. Consider the block traversal problem on the chain, which we call in the following~$P$, with $s$ as the starting vertex and  $t_1^*, t_2^*$ as the target vertices.
Note that, while \alg does not learn anything during the traversal of a block~$B_j$ about another block~$B_{j'}$, the traversal of~$B_j$ can still affect the strategy followed in~$B_{j'}$: Indeed, the agent may change strategy when facing the same situation multiple times.
Let $\alg_{B_j}$ denote the strategy followed in block $B_j$.

\begin{restatable}{observation}{observationchain}\label{obs:chains-of-blocks}
	In the block traversal problem on chain $P$, we have
		\begin{enumerate}[label=\alph*)]
				\item $\alg_P\geq \sum_{j=1}^i \alg_{B_j}$,
				\item $\opt_P=\sum_{j=1}^i \opt_{B_j}$,
				\item $l_P=\sum_{j=1}^il_{B_j}$.
			\end{enumerate}
	
\end{restatable}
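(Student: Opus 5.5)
The key structural fact is that the exit vertex of $B_j$ coincides with the starting vertex of $B_{j+1}$, so it is a cut vertex of the chain: removing it separates $B_1,\dots,B_j$ (together with $s$) from $B_{j+1},\dots,B_i$. Write $s_j$ for the starting vertex of $B_j$, so $s_1=s$ and $s_{j+1}=u_{\mathrm{Ex}}(B_j)$; write $s_{i+1}$ for the exit vertex of $B_i$. Every walk from $s$ to $\{t_1^*,t_2^*\}$, and every walk from $s$ to $s_{i+1}$, must pass through each of $s_2,\dots,s_{i+1}$ in this order. I would prove all three parts by cutting walks at these cut vertices.

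For a), consider the agent's walk on $P$. For each $j$, look at the portion of the walk lying inside block $B_j$: the segments between consecutive visits to $s_j$ or $s_{j+1}$ that stay in $\{s_j\}\cup B_j\cup\{s_{j+1}\}$. Up to the first arrival at $s_{j+1}$ through an exit edge of $B_j$, these segments are exactly a run of the block traversal problem on $B_j$ under the strategy $\alg_{B_j}$. Whenever the agent sits in $s_j$ or inside $B_j$, it learns nothing about other blocks, and vice versa, so its moves within $B_j$ form a legitimate block traversal strategy. The cost of that run, excluding the final exit edge, is $\alg_{B_j}$. Now assign each edge traversal of the chain walk to the block containing the edge. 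Entry edges of $B_{j+1}$ are the exit edges of $B_j$, and by the paper's convention they are charged to $B_{j+1}$. The charged pieces are then disjoint parts of the walk. The first traversal of an exit edge of $B_j$ is charged to $B_{j+1}$'s entry cost, which is counted in $\alg_{B_{j+1}}$. Summing gives $\alg_P\ge\sum_j\alg_{B_j}$. The inequality rather than equality comes from traversals after a block has been left, such as going back, which are discarded. The exit edge of $B_i$ is excluded on both sides.

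For c), $l_P=d(s,s_{i+1})$. Any shortest path passes through $s_2,\dots,s_i$ in order, and each subpath from $s_j$ to $s_{j+1}$ stays within $B_j$ and its entry edges. So the distance splits as $\sum_j d(s_j,s_{j+1})=\sum_j l_{B_j}$, with $\le$ obtained by concatenation and $\ge$ by the cut.

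For b), concatenating optimal walks for each $B_j$, each running from $s_j$ to $s_{j+1}$ and visiting all of $B_j$, gives a walk from $s$ to $s_{i+1}$ visiting everything, so $\opt_P\le\sum_j\opt_{B_j}$. For the reverse inequality, take an optimal walk $W$ for $P$. For each $j$, the edges of $W$ lying in $B_j$ (including its entry edges) must contain a walk from $s_j$ to $s_{j+1}$ visiting all of $B_j$. To extract it, I would contract each other side of the cut vertices to a point. The projection of $W$ onto $B_j\cup\{s_j,s_{j+1}\}$ is obtained by deleting the excursions into other blocks, and these excursions return to the same cut vertex. This projection is a walk from $s_j$ to $s_{j+1}$ covering $B_j$, so its length is at least $\opt_{B_j}$. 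The projections use disjoint edge sets of $W$, so their lengths sum to at most $|W|$.

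The main obstacle is the bookkeeping in a): making precise that the chain agent's behaviour restricted to $B_j$ is a valid block traversal strategy. This includes the case where the agent re-enters $B_j$ after reaching $s_{j+1}$, and it requires the boundary edges at cut vertices to be handled under R5, where the adversary chooses the entry order identically in both settings.
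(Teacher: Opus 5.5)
Your proof is correct and follows the same route as the paper: the exit vertex of each $B_j$ is a cut vertex that every relevant walk passes through, and edges between $B_j$ and $B_{j+1}$ are charged to $B_{j+1}$. You simply spell out details the paper leaves implicit, notably the projection argument for $\opt_P\ge\sum_j\opt_{B_j}$. One small wording slip: in a), the run on $B_j$ ends at the first \emph{traversal} of an exit edge of $B_j$, which leaves $s_{j+1}$ rather than arriving at it; your charging of that traversal to $B_{j+1}$ is nevertheless stated correctly.
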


\begin{proof}
	When the online or offline agent traverses an edge leading from $B_j$ to~$B_{j+1}$, we charge this cost on~$B_{j+1}$.
	The statements now follow from the fact that every path from $s$ to  $t_1^*$ or $t_2^*$ contains the exit vertices of all blocks (where the exit vertex of one block is the starting vertex of the subsequent block).
 \end{proof}

\section{Recursive block construction}\label{sec:block-construction}

We now turn to proving \cref{thm:3-blocks}.
As explained in \cref{sec:lower-bound-overview}, we construct the blocks for this recursively using similar ideas as in \cite{birx}, but estimate the core cost instead of the algorithm's total cost.
One of the key differences is that our blocks have a fixed block length and variable optimum costs, whereas in \cite{birx}, the block length is variable and the optimum cost is fixed.
In our final construction, when the agent has to backtrack, it will have the choice between two different chains containing the same number of blocks.
It will be beneficial for us that these have the same length. % because then we can charge the incurred cost adversarially on any chain.
Another key difference is that we construct only one type of block, where in~\cite{birx}, the beginnings and ends of chains of blocks have a special form so that three types of blocks are needed.
This is not necessary for our construction, which makes it simpler.

Fix a sufficiently large and even integer $k$.
For $d\in\N_0$, we recursively define an adversarial strategy $\mathcal{A}_d$ that constructs an instance $B$ of the block traversal problem with the following properties.

\begin{enumerate}[label=B\arabic*)]
	\item The block length is $l_B=2k^d$.\label{block-property:length}
	\item The optimal cost is bounded by $\opt_B \leq (d+2)k^d$.\label{block-property:opt}
	\item The weight of the entry edges and exit edges is $k^d$.\label{block-property:weights}
\end{enumerate}

We say that $d$ is the \emph{level} of the adversarial strategy, which corresponds to its recursion depth.

\begin{lemma}\label{lem:recursive block}
	For every $d\in \N_0$, there exists an adversarial strategy $\mathcal{A}_d$ of level~$d$ that, for every block traversal algorithm \alg, constructs a planar instance $B$ of the block traversal problem fulfilling B1)-B3) with the following property:
	Let $r_d$ be the infimum of $\core_B/\opt_B$ over all algorithms~\alg, where $B$ is the block constructed by $\mathcal{A}_d$ for \alg.
	Letting $k\in\Omega(d^2)$ and $d\to \infty$, we have~$r_d\to 3$.
\end{lemma}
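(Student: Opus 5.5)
We prove the statement by induction on $d$. At each level the block of \cref{fig:block-level-1} is built, with every unit edge replaced by a level-$(d-1)$ block from $\mathcal{A}_{d-1}$; by scaling, those sub-blocks have length $2k^{d-1}$.

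\emph{Base case.} For $d=0$ take $s$ joined by weight-$1$ entry edges to $\uopt$ and $\ualg$, with $\uopt=\ualg$ up to a weight-$0$ edge, and $\ualg$ joined to $\uex$ by an edge of weight $1$. Exit edges have weight $1$. Then B1)--B3) hold, and trivially $r_0\ge 1$.

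\emph{Construction of $\mathcal{A}_d$.} The starting vertex $s$ gets two entry edges of weight $k^d$. By R5) we may assume the agent takes the entry edge to $\ualg$. The adversary then grows, in both directions from $\ualg$, a path of sub-blocks, each constructed by an independent copy of $\mathcal{A}_{d-1}$. Whenever the agent crosses a sub-block towards an unexplored end, a fresh sub-block is appended there, which is possible because of R1) and R3).

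The adversary stops in one of two situations: one side (the $w$-side) contains $k/2$ sub-blocks, or the total number $m$ of sub-blocks reaches $k$. The last explored vertex on the $w$-side becomes $\uex=w_k$. The unexplored far end $v_i$ on the other side becomes $\uopt$. It is joined to $s$ and to $\uex$ by edges of weight $k^d$, and by R4)/R5) the agent must visit $v_i$ before it may take an exit edge.

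\emph{Checking B1)--B3).} B3) holds by construction. For B1), both routes from $s$ to $\uex$ have length $k^d+k^d$, so $l_B=2k^d$. For B2), the offline walk $s\to v_i\to\dots\to\ualg\to\dots\to w_k$ gives $\opt_B\le k^d+S$, where $S=\sum_j\opt_{B_j}\le m(d+1)k^{d-1}\le (d+1)k^d$. Planarity is preserved because sub-blocks are planar and are placed along a path plus one cycle.

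\emph{Core-cost recursion.} Let $m=k/2+i$, where $i$ counts the sub-blocks on the $v$-side. The agent pays at least the following:
\begin{itemize}
\item the entry edge, $k^d$;
\item $\sum_j\alg_{B_j}\ge\sum_j\bigl(r_{d-1}\opt_{B_j}+z_{B_j}\bigr)$ for the first traversals;
\item at least $2k^d$ to go from $w_k$ to $v_i$ and back;
\item backtracking over already traversed sub-blocks, at cost $l_{B_j}$ each, caused by zigzagging.
\end{itemize}
Since $z_{B_j}=\opt_{B_j}-l_{B_j}$, we have $\sum_j z_{B_j}-\sum_j\opt_{B_j}=-2k^{d-1}m$. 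Moreover $z_B=k^d+S-2k^d$. Combining these,
\[
\core_B\ \ge\ 4k^d-2mk^{d-1}+\mathrm{(backtrack)}+r_{d-1}S \;=\; 3k^d-2ik^{d-1}+\mathrm{(backtrack)}+r_{d-1}S .
\]

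\emph{Main obstacle: charging zigzags.} The key step is to show that backtracking costs at least about $2ik^{d-1}$, so that $\core_B\ge (3-O(1/k))k^d+r_{d-1}S$. The idea is that $v$-side sub-blocks exist only because the agent reversed direction after exploring part of the $w$-side. I would prove this by a charging argument. Each reversal from the $w$-side to the $v$-side re-crosses all explored sub-blocks between the two fronts, and so does every later reversal. The adversary's freedom to decide where the far ends lie lets it charge each $v$-side sub-block $j$ a recrossing of cost $l_{B_j}$. Care is needed at the moment the cap $m=k$ is hit, and for the single sub-block that may be in progress when the process stops; these give the $O(1/k)$ slack.

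\emph{Concluding $r_d\to3$.} Put $S'=S/k^d$. Then $1\le S'\le d+1$, using $\opt_{B_j}\ge l_{B_j}$ and $m\ge k/2$. The inequality gives
\[
r_d\ \ge\ \frac{3-O(1/k)+r_{d-1}S'}{1+S'},
\qquad\text{hence}\qquad
3-r_d\ \le\ \frac{d+1}{d+2}\,(3-r_{d-1})+O(1/k).
\]
Unrolling, $3-r_d\le \frac{2}{d+2}+O(d/k)$. Taking $k\in\Omega(d^2)$ makes the error term $O(1/d)$, so $r_d\to 3$ as $d\to\infty$.
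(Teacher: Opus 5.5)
Your overall plan is the paper's: grow two chains of level-$(d-1)$ blocks from $\ualg$, charge one backtrack to each explored $v$-side sub-block, and run the recursion on the core cost. However, your adversary has a hole that breaks the cost bound. It never says what happens if the agent walks from $\ualg$ back to $s$ and takes the unused entry edge.

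Suppose the adversary keeps the structure you describe: a $w$-side chain of $k/2$ sub-blocks ending at $\uex$, and $\uopt$ adjacent to both $s$ and $\uex$ by weight-$k^d$ edges. Then an agent that immediately does $s\to\ualg\to s\to\uopt\to\uex$ pays only $\alg_B=4k^d$ and never enters the $w$-side chain. Meanwhile $\opt_B\ge l_B=2k^d$ and $z_B=\opt_B-2k^d$ still account for that whole chain. Hence $\core_B=6k^d-\opt_B\le 2\,\opt_B$. Both of your bullets, ``$\sum_j\alg_{B_j}$ for the first traversals'' and ``at least $2k^d$ from $w_k$ to $v_i$ and back'', fail on this branch.

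The paper closes the hole with a second stopping rule (its case~(1)), triggered as soon as the agent returns to $s$:
\begin{itemize}
\item the partially explored $v$-side sub-block is completed, with both exit edges leading to $\uopt$;
\item the $w$-side chain is fixed at its full length without being revealed;
\item the edge $\{\uopt,\uex\}$ is \emph{not} added.
\end{itemize}
At that point the agent has paid $2k^d$ on entry edges. It must pay a third $k^d$ to leave $s$ again, and it must still cross every $w$-side sub-block. So the same lower bound on $\alg_B$ holds as in the other case.

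The step you single out as the main obstacle is easy and needs no charging scheme or adversarial freedom. At either stopping time the agent is at the end of the $w$-side chain or at $s$. Both are separated from the explored $v$-side sub-blocks $B'_1,\dots,B'_{i-1}$ by $\ualg$. So each of these was re-crossed after being explored, at cost at least $l_{B'_j}$. This exactly cancels the $l_{B'_j}$ that the $v$-side contributes to $z_B$ through $z_B\le\sum_j z_{B_j}+\sum_{j\le i}(z_{B'_j}+l_{B'_j})+k^{d-1}$. The only remaining loss is the partially explored $B'_i$. That loss is $O(d\,k^{d-1})$ rather than $O(k^{d-1})$, but it is still absorbed by the recursion, just like the paper's $-(3d+4)k^{d-1}$ term.

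Two minor points: your cap ``$m$ reaches $k$'' can never fire before one side reaches $k/2$ sub-blocks, and with $r_0=1$ the unrolled bound is $4/(d+2)+O(d/k)$, not $2/(d+2)+O(d/k)$.
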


\begin{proof}
	For $d=0$, we define $\mathcal{A}_0$ to output the block traversal instance illustrated in \cref{fig:block-level-0} for every algorithm \alg.
	It is straightforward that $l_B=\opt_B=\alg_B=2$ so that properties B1)-B3) are fulfilled and we have $r_0=1$.
	Next, we define the strategy $\mathcal{A}_d$ for $d\geq 1$ assuming that  $\mathcal{A}_{d-1}$ is already given.
	
	\begin{figure}[t]
		\centering
		\begin{tikzpicture}[scale=0.75, every node/.style={circle, draw, inner sep=0pt, minimum size=2mm, font=\footnotesize}]
			\node (s) at (-0.5,0) {};
			\node (uopt) at (1,0) {};
			\node (ualg) at (2,0) {};
			\node (uex) at (3,0) {};
			\node (t1) at (4.5, 0.5) {};
			\node (t2) at (4.5, -0.5) {};
			\draw[blue, thick] (s) to node [below, draw=none] {1} (uopt);
			\draw[blue, thick] (s) to [bend left=45] node [above, draw=none, xshift=-5mm] {1} (ualg);
			\draw[red, thick] (uex) to node [above, draw=none] {1} (t1);
			\draw[red, thick] (uex) to node [below, draw=none] {1} (t2);
			\draw[thick] (uopt) to node [above, draw=none] {0} (ualg) to node [above, draw=none] {1} (uex) ;
			\node at (-0.5, -0.3) [draw=none] {$s$};
			\node at (1, -0.3) [draw=none] {$\uopt$};
			\node at (2, -0.3) [draw=none] {$\ualg$};
			\node at (3, -0.3) [draw=none] {$\uex$};
			\node at (4.8, 0.5) [draw=none] {$t_1$};
			\node at (4.8, -0.5) [draw=none] {$t_2$};
			\draw[very thick] (0.5,-0.7) to (3.5, -0.7) to (3.5, 0.7) to (0.5,0.7) to (0.5, -0.7);
		\end{tikzpicture}
		\caption{Block for the adversarial strategy $\mathcal{A}_0$ of level 0.}\label{fig:block-level-0}
	\end{figure}
	
	\vspace{2mm}
	\noindent
	\textbf{Construction.}
	The construction is illustrated in \cref{fig:bloc-level-2}.
	The starting vertex~$s$ is incident to the two entry edges of weight~$k^d$ and we may assume that the agent traverses the algorithmic entry edge (using \ref{property:exploration-restriction-nochoice}). 
	Then it is at position~$\ualg$, which we identify with the starting vertices of two blocks in which we use $\mathcal{A}_{d-1}$, i.e., $\ualg$ is incident to four edges of weight~$k^{d-1}$.
	From here, we build two chains of blocks starting from~$\ualg$, i.e., whenever the agent traverses for the first time an exit edge of a block, we present it with a new block, where we identify the exit vertex of the previous block with the starting vertex of the new block.
	In each of these blocks, we apply strategy~$\mathcal{A}_{d-1}$.
	Note that we use here property \ref{block-property:weights} to be able to build these chains.
	In the chain, we say that a block is explored if the agent has traversed its exit edge.
	This procedure stops if either (1) the agent travels back to $s$ or (2) one of the two chains consists of~$(k/2)-1$ explored blocks.
	
	In case (1), let $B_i', \dots, B_1', \ualg, B_1, \dots, B_j$ be the chains of blocks, where $1\leq i, j\leq (k/2)-1$ and all blocks except for $B_i', B_j$ are explored (but $B_i', B_j$ have been entered at some point).
	When the case is triggered, the agent is located at vertex~$s$. 
	Then we complete block $B_i'$ to some valid block constructed by $\mathcal{A}_{d-1}$ and connect its two exit edges both to the same vertex $\uopt$, which we define to be the optimal entry vertex.
	The chain of blocks $ \ualg, B_1, \dots, B_j$ is extended into a chain of~$(k/2)-1$ blocks. 
	Importantly, we only fix here the number of these blocks but do not reveal the structure of the new blocks to the agent. This will be determined when the agent traverses them, where we will again use strategy~$\mathcal{A}_{d-1}$.
	The two exit edges of block $B_{(k/2)-1}$ are then connected to a vertex $w$, and $w$ is connected to a vertex $\uex$ by an edge of weight $k^{d-1}$, which we define to be the exit vertex.
	
	In case (2), let $B_i', \dots, B_1', \ualg, B_1, \dots, B_{(k/2)-1}$ be the chains of blocks, where $i\leq (k/2)-1$ and all blocks but $B_i'$ are explored.
	The case is triggered when the agent traverses an exit edge of block $B_{(k/2)-1}$.
	Then we let both exit edges of $B_{(k/2)-1}$ lead to the same vertex $w$. This is connected by an edge of weight $k^{d-1}$ to the exit vertex $\uex$.
	Block $B_i'$ is completed to some valid block constructed by $\mathcal{A}_{d-1}$ and both of its exit edges lead to the same vertex $\uopt$, which we define to be the optimal entry vertex.
	Additionally, we introduce an edge of weight $k^d$ between~$\uopt$ and~$\uex$.
	The case $d=1$ is illustrated in \cref{fig:block-level-1}.
	
	\begin{figure}[t]
		\centering
		{
			\colorlet{origblue}{blue} % Save original blue
			\colorlet{blue}{green!70!black}
			\colorlet{origred}{red} % Save original blue
			\colorlet{red}{orange!95!black}
			\begin{tikzpicture}[scale=0.495]
				\tikzset{every node/.style={font=\footnotesize}}
				\node (s) at (-0.4,0) [circle, draw, minimum size=2mm, inner sep=0pt] {};
				\node at (-0.5,-0.4) {{\footnotesize $s$}};
				\node (ub) at (1.5,0) [circle, draw, minimum size=2mm, inner sep=0pt] {};
				\node at (1.74,0.7) {{\footnotesize $\uopt$}};
				\node (ua) at (10,0) [circle, draw, minimum size=2mm, inner sep=0pt] {};
				\node at (10,-0.4) {{\footnotesize $\ualg$}};
				\node (uh) at (21,0) [circle, draw, minimum size=2mm, inner sep=0pt]{};
				\node at (21,0.5) {$w$};
				\node (ue) at (22,0) [circle, draw, minimum size=2mm, inner sep=0pt] {};
				\node at (22,0.5) {{\footnotesize $\uex$}};
				
				\node (t1) at (23.5,0.5)  [circle, draw, minimum size=2mm, inner sep=0pt] {};
				\node at (23.5, 1) {$t_1$};
				\node (t2) at (23.5,-0.5)  [circle, draw, minimum size=2mm, inner sep=0pt] {};
				\node at (23.5, -1) {$t_2$};
				
				\node (B1p) at (8.2,0) [rectangle, draw, minimum width=1cm, inner sep=2pt, minimum height=5mm] {{\scriptsize  $B_1'$}};
				\node (B2p) at (5.7,0) [rectangle, draw, minimum width=1cm, inner sep=2pt, minimum height=5mm] {};
				\node (B3p) at (3.2,0) [rectangle, draw, minimum width=1cm, inner sep=2pt, minimum height=5mm] {{\scriptsize  $B_i'$}};
				
				\node (B1) at (11.8,0) [rectangle, draw, minimum width=1cm, inner sep=2pt, minimum height=5mm] {{\scriptsize  $B_1$}};
				\node (B2) at (14.3,0) [rectangle, draw, minimum width=1cm, inner sep=2pt, minimum height=5mm] {{\scriptsize $B_2$}};
				\node (B3) at (16.8,0) [rectangle, draw, minimum width=1cm, inner sep=2pt, minimum height=5mm] {};
				\node (B4) at (19.3,0) [rectangle, draw, minimum width=1cm, inner sep=2pt, minimum height=5mm] {{\scriptsize $B_{\frac{k-2}{2}}$}};
				
				\draw[blue, thick] (ua) to (8.8,0);
				\draw[blue, thick] (ua) to [bend right=25] (8.6,0.25);
				\draw[blue, thick] (7.6, 0) to (6.3,0);
				\draw[blue, thick] (7.6, 0) to [bend right=25] (6.1,0.25);
				\draw[blue, thick] (5.1, 0) to (3.8,0);
				\draw[blue, thick] (5.1, 0) to [bend right=25] (3.6,0.25);
				\draw[blue, thick] (2.6, 0) to (ub);
				\draw[blue, thick] (2.6, 0) to [bend right=40] (ub);
				
				\draw[blue, thick] (ua) to (11.2,0);
				\draw[blue, thick] (ua) to [bend left=25] (11.4,0.25);
				\draw[blue, thick] (12.4,0) to (13.7,0);
				\draw[blue, thick] (12.4,0) to [bend left=25] (13.9,0.25);
				\draw[blue, thick] (14.9,0) to (16.2,0);
				\draw[blue, thick] (14.9,0) to [bend left=25] (16.4,0.25);
				\draw[blue, thick] (17.4,0) to (18.7,0);
				\draw[blue, thick] (17.4,0) to [bend left=25] (18.75,0.3);
				\draw[blue, thick] (19.9,0) to (uh);
				\draw[blue, thick] (19.9,0) to [bend left=40] (uh);
				\draw[blue, thick] (uh) to (ue);
				
				\draw[very thick] (1, -1.2) to (1, 1) to (22.7, 1) to (22.7, -1.2) to (1,-1.2);
				
				\draw[very thick, red] (s) to [in=180, out=60] (1.5, 1.5) to (8, 1.5) to [out=0, in=120] (ua);
				\draw[very thick, red] (s) to (ub);
				\draw[very thick, red, dashed] (ue) to [out=-130, in=0] (20,-0.9) to (3,-0.9) to [out=180, in=-60] (ub);
				\draw[very thick, red] (ue) to (t1);
				\draw[very thick, red] (ue) to (t2);
			\end{tikzpicture}
			\colorlet{blue}{origblue} 
			\colorlet{red}{origred} 
		}
		\caption{Recursive block construction (of level $d$). $B_1, \dots, B_{(k/2)-1}, B_1', \dots, B_i'$ are blocks constructed by $\mathcal{A}_{d-1}$. The thick edges (depicted in orange) have weight $k^d$ and the edges depicted in green have weight $k^{d-1}$.
			The dashed edge between $\uopt$ and $\uex$ is only present in case~(2), i.e., if the agent does not travel back to $s$.
			From this drawing, we obtain that the block is planar.
		}\label{fig:bloc-level-2}
	\end{figure}
	
	\vspace{1mm}
	\noindent
	\textbf{Block properties and analysis.}
	Note that \ref{block-property:weights} is obviously fulfilled in the construction and one can observe from~\cref{fig:bloc-level-2} that the constructed block is planar, using that, by induction, the blocks constructed by $\mathcal{A}_{d-1}$ are planar as well.
	Using \cref{obs:chains-of-blocks} and that blocks constructed by $\mathcal{A}_{d-1}$ fulfill properties \ref{block-property:length}-\ref{block-property:weights}, we obtain that a shortest path from $s$ to $\uex$ is given by first traversing the algorithmic entry edge, then blocks $B_1, \dots, B_{(k/2)-1}$ and then the edges to~$w$ and $\uex$. We obtain
	\begin{equation}\label{eq:lB-rec}
		l_B=k^d+\sum\limits_{j=1}^{(k/2)-1} l_{B_j}+2k^{d-1}=k^d + (k/2)\cdot 2k^{d-1}=2k^d,
	\end{equation}
	so that property \ref{block-property:length} is fulfilled (note that, in case (2), the existence of the additional edge~$\{\uopt, \uex\}$ does not lead to a shorter path).
	In both cases (1) and (2), observe that
	\begin{equation}\label{eq:opt-level-2}
		\opt_B \leq k^d + \sum\limits_{j=1}^i \opt_{B_j'} + k^{d-1} + \sum\limits_{j=1}^{(k/2)-1} \opt_{B_j} + 2k^{d-1}.
	\end{equation}
	Using that $\opt_{B_j}\leq (d+1)k^{d-1}$, we obtain
	\begin{align*}
		\opt_B&\leq k^d + \left(i+\frac{k}{2}-1\right)\cdot (d+1)k^{d-1}+3k^{d-1}\\
		\overset{3\leq 2(d+1)}&{\leq} k^d + \left(i+\frac{k}{2}+1\right)\cdot (d+1)k^{d-1}
		\overset{i\leq  \frac{k}{2}-1}{\leq} k^d+ (d+1) k^d= (d+2)k^d,
	\end{align*}
	so that property \ref{block-property:opt} is fulfilled as well.
	By definition, we have $z_B=\opt_B-l_B$ and plugging in our findings from \eqref{eq:lB-rec} and \eqref{eq:opt-level-2}, we obtain
	\begin{align}
		z_B&\leq
		\sum\limits_{j=1}^{(k/2)-1} (\opt_{B_j}-l_{B_j})+\sum\limits_{j=1}^{i} \opt_{B_j'} + k^{d-1}
		=\sum\limits_{j=1}^{(k/2)-1} z_{B_j} + \sum\limits_{j=1}^{i} (z_{B_j'}+l_{B_j'}) + k^{d-1}.\label{eq:zB-level-2}
	\end{align}

	Next, we analyze the cost incurred by the agent.
	We say that a block is \emph{backtracked} if the agent traverses it --- either from $s$ to $\uex$ or in the other direction --- after the block was explored.
	In case~(1), the agent traverses three times an entry edge of weight $k^d$ and, before the case was triggered, it explores and backtracks the blocks~$B_1', \dots, B_{i-1}'$. To reach $\uex$, \alg also has to traverse the blocks $B_1, \dots, B_{(k/2)-1}$, where we use that the edge $\{\uopt, \uex\}$ is not present in this case.
	Traversing the exit edge of $ B_{(k/2)-1}$ and $\{w, \uex\}$, the agent incurs another cost of $2k^{d-1}$.
	Using our findings for chains (\cref{obs:chains-of-blocks}), we obtain
	\begin{equation}\label{eq:Alg-recursive-cost}
		\alg_B \geq 3k^d + \sum\limits_{j=1}^{(k/2)-1} \alg_{B_j}+\sum\limits_{j=1}^{i-1}(\alg_{B_j'}+l_{B_j'})+ 2k^{d-1}.
	\end{equation}
	In case (2), the agent traverses one entry edge, explores and backtracks the blocks~$B_1', \dots, B_{i-1}'$ and explores blocks $B_1, \dots, B_{(k/2)-1}$ before the case was triggered. 
	Then, the agent is located at~$w$ and, by property \ref{property:exploration-restriction-nochoice}, the agent is not able to traverse an exit edge before exploring~$\uopt$.
	Traveling from $w$ to~$\uopt$ and then to $\uex$ costs at least~$2k^d+k^{d-1}$.
	Therefore, we obtain the same bound~\eqref{eq:Alg-recursive-cost} for case (2) as in case (1).
	By \eqref{eq:zB-level-2} and \eqref{eq:Alg-recursive-cost}, we obtain a core cost of
	\begin{align}
		\core_B&=\alg_B-z_B
		\geq 3k^d + \sum\limits_{j=1}^{(k/2)-1} (\alg_{B_j}-z_{B_j})+\sum\limits_{j=1}^{i}(\alg_{B_j'}-z_{B_j'})\nonumber\\
		&\hspace{3cm}- \alg_{B_i'}- \underbrace{l_{B_i'}}_{=2k^{d-1}}+ k^{d-1}\nonumber\\
		&= 3 k^d + \sum\limits_{j=1}^{(k/2)-1} \core_{B_j} + \sum\limits_{j=1}^{i} \core_{B_j'} - \alg_{B_i'} - k^{d-1}\nonumber\\
		&\geq 3 k^d + \sum\limits_{j=1}^{(k/2)-1} \core_{B_j} + \sum\limits_{j=1}^{i} \core_{B_j'} - (3d+4) k^{d-1},\label{eq:algc-level-2}
	\end{align}
	where we have used for the last inequality that $\alg_{B_i'}\leq 3 \opt_{B_i'}\leq 3(d+1) k^{d-1}$ because otherwise, the statement of the lemma is clear.
	
	\vspace{2mm}
	\noindent
	\textbf{Ratio analysis.}
	For $d\geq 1$, we obtain using~\eqref{eq:opt-level-2} and \eqref{eq:algc-level-2} that
	\begin{align*}
		r_d&\geq  \frac{3 k^d + \sum_{j=1}^{(k/2)-1} \core_{B_j} + \sum_{j=1}^{i} \core_{B_j'} - (3d+4) k^{d-1}}{k^d + \sum_{j=1}^{(k/2)-1} \opt_{B_j} + \sum_{j=1}^i \opt_{B_j'} + 3k^{d-1}}\\
		&\geq \frac{3 k^d + r_{d-1} \left(\sum_{j=1}^{(k/2)-1} \opt_{B_j} + \sum_{j=1}^{i} \opt_{B_j'}\right) - (3d+4) k^{d-1}}{k^d + \sum_{j=1}^{(k/2)-1} \opt_{B_j} + \sum_{j=1}^i \opt_{B_j'} + 3k^{d-1}}\\
		&\geq \frac{3k^d + r_{d-1}k (d+1) k^{d-1}- (3d+4) k^{d-1}}{k^d+k(d+1)k^{d-1}+3k^{d-1}}
		= \frac{3 + r_{d-1}(d+1) - (3d+4)/k}{1+(d+1)+3/k},
	\end{align*}
	where we have used for the inequality in the last line that $\opt_{B_j}\leq (d+1) k^{d-1}$ by \ref{block-property:opt}, $i<k/2$, and  the fact that $\frac{x+z}{y+z}\leq \frac{x}{y}$ for $x\geq y\geq 0$ and $z\geq 0$.
	From this recursive expression, it is easy to see that, 
	if $k\in \Omega(d^2)$ and we let $d\to \infty$ (and therefore also $k\to \infty$), we have $r_d\to 3$. (This can be seen from the fact that the terms with $k$ in the denominator can be neglected and $r=3$ is the only solution of $r=(3+r(d+1))/(1+d+1) $.)
 \end{proof}

Choosing $d$ large enough in the Lemma, we obtain~\cref{thm:3-blocks}.

\section{Block arrangement}\label{sec:lower-bound-exploration-final}

Now that we have established the necessary preliminaries on block traversal, we turn to proving our main result (\cref{thm:4-lower-bound}) in this section.
For this, we use the same block arrangement as in~\cite{birx}, but we give a refined analysis using our stronger results on the block traversal problem.

\begin{figure}[t]
	\centering
	\begin{tikzpicture}[scale=0.65]
		\tikzset{every node/.style={font=\footnotesize}}
		\def\centerX{2.5}
		\def\centerY{0}
		\def\radius{2.5} 
		\def\nodes{14}
		\def\rectWidth{6mm}
		\def\rectHeight{3mm}
		
		\foreach \i in {1,2,4,5,6} {
			\pgfmathsetmacro{\angle}{360/\nodes * \i}
			\pgfmathsetmacro{\x}{\centerX + \radius * cos(\angle)}
			\pgfmathsetmacro{\y}{\centerY + \radius * sin(\angle)}
			\pgfmathsetmacro{\tangentAngle}{\angle + 90} 
			\node[draw, minimum width=\rectWidth, minimum height=\rectHeight, rotate=\tangentAngle] at (\x,\y) {};
			\node (R\i) [rectangle, minimum height=1.8mm, minimum width=4.5mm, inner sep=0pt, rotate=\tangentAngle] at (\x,\y) {};
			\pgfmathsetmacro{\dx}{0.25 * cos(\tangentAngle)}
			\pgfmathsetmacro{\dy}{0.25 * sin(\tangentAngle)}
			\node (U\i) [rectangle, minimum height=0.1mm, minimum width=0.1mm, inner sep=0pt, rotate=\tangentAngle] at (\x-\dx,\y-\dy) {};
			\node (S\i) [rectangle, minimum height=0.1mm, minimum width=0.1mm, inner sep=0pt, rotate=\tangentAngle] at (\x+\dx,\y+\dy) {};
		}
		\foreach \i in {7,8,9,10,11} {
			\pgfmathsetmacro{\angle}{360/12 * \i}
			\pgfmathsetmacro{\x}{\centerX + \radius * cos(\angle)}
			\pgfmathsetmacro{\y}{\centerY + \radius * sin(\angle)}
			\pgfmathsetmacro{\tangentAngle}{\angle + 90} 
			\node[draw, minimum width=\rectWidth, minimum height=\rectHeight, rotate=\tangentAngle] at (\x,\y) {};
			\node (R\i) [rectangle, minimum height=1.8mm, minimum width=4.5mm, inner sep=0pt, rotate=\tangentAngle] at (\x,\y) {};
			\pgfmathsetmacro{\dx}{0.25 * cos(\tangentAngle)}
			\pgfmathsetmacro{\dy}{0.25 * sin(\tangentAngle)}
			\node (U\i) [rectangle, minimum height=0.1mm, minimum width=0.1mm, inner sep=0pt, rotate=\tangentAngle] at (\x-\dx,\y-\dy) {};
			\node (S\i) [rectangle, minimum height=0.1mm, minimum width=0.1mm, inner sep=0pt, rotate=\tangentAngle] at (\x+\dx,\y+\dy) {};
		}
		\pgfmathsetmacro{\angle}{360/\nodes * 3}
		\pgfmathsetmacro{\x}{\centerX + \radius * cos(\angle)}
		\pgfmathsetmacro{\y}{\centerY + \radius * sin(\angle)}
		\node (u1) [circle, draw, inner sep=0pt, minimum size=4mm] at (\x, \y) {$u_1$};
		\node (vs) [circle, draw, inner sep=0pt, minimum size=4mm] at (-0.2,0) {$v_\mathrm{s}$};
		\node (v1) [circle, draw, inner sep=0pt, minimum size=4mm] at (5.3,0) {$v_1$};
		\node (v2) [circle, draw, inner sep=0pt, minimum size=4mm] at (10.9,0) {$v_2$};
		
		\draw (vs) to (R6);
		\draw (vs) to [bend right=40] (R6);
		\draw (vs) to (R6);
		\draw (U6) to (R5);
		\draw (U6) to [bend right=40] (R5);
		\draw (U5) to (R4);
		\draw (U5) to [bend right=40] (R4);
		\draw (U4) to (u1);
		\draw (U4) to [bend right=40] (u1);
		\draw (S2) to (u1);
		\draw (S2) to [bend left=40] (u1);
		\draw (S1) to (R2);
		\draw (S1) to [bend left=40] (R2);
		\draw (v1) to (R1);
		\draw (v1) to [bend left=40] (R1);
		
		\draw (vs) to (R7);
		\draw (vs) to [bend left=40] (R7);
		
		\draw (S11) to (v1);
		\draw (S11) to [bend left=40] (v1);
		
		\foreach \x in {7,...,10}{
			\draw (S\x) to (R\the\numexpr \x+1 \relax);
			\draw (S\x) to [bend left=40] (R\the\numexpr \x+1 \relax);
		}
		
		\def\centerX{8.1}
		\def\centerY{0}

		\foreach \i in {1,2,3,5,6} {
			\pgfmathsetmacro{\angle}{360/\nodes * \i}
			\pgfmathsetmacro{\x}{\centerX + \radius * cos(\angle)}
			\pgfmathsetmacro{\y}{\centerY + \radius * sin(\angle)}
			\pgfmathsetmacro{\tangentAngle}{\angle + 90} 
			\node[draw, minimum width=\rectWidth, minimum height=\rectHeight, rotate=\tangentAngle] at (\x,\y) {};
			\node (R\i) [rectangle, minimum height=1.8mm, minimum width=4.5mm, inner sep=0pt, rotate=\tangentAngle] at (\x,\y) {};
			\pgfmathsetmacro{\dx}{0.25 * cos(\tangentAngle)}
			\pgfmathsetmacro{\dy}{0.25 * sin(\tangentAngle)}
			\node (U\i) [rectangle, minimum height=0.1mm, minimum width=0.1mm, inner sep=0pt, rotate=\tangentAngle] at (\x-\dx,\y-\dy) {};
			\node (S\i) [rectangle, minimum height=0.1mm, minimum width=0.1mm, inner sep=0pt, rotate=\tangentAngle] at (\x+\dx,\y+\dy) {};
		}
		\pgfmathsetmacro{\angle}{360/\nodes * 4}
		\pgfmathsetmacro{\x}{\centerX + \radius * cos(\angle)}
		\pgfmathsetmacro{\y}{\centerY + \radius * sin(\angle)}
		\node (u2) [circle, draw, inner sep=0pt, minimum size=4mm] at (\x, \y) {$u_2$};
		
		\draw (v1) to (R6);
		\draw (v1) to [bend right=40] (R6);
		\draw (U6) to (R5);
		\draw (U6) to [bend right=40] (R5);
		\draw (U5) to (u2);
		\draw (U5) to [bend right=40] (u2);
		\draw (U4) to (u1);
		
		\draw (S3) to (u2);
		\draw (S3) to [bend left=40] (u2);
		\draw (S2) to (R3);
		\draw (S2) to [bend left=40] (R3);
		\draw (S1) to (R2);
		\draw (S1) to [bend left=40] (R2);
		\draw (v2) to (R1);
		\draw (v2) to [bend left=40] (R1);
		
		\foreach \i in {7,8,9,10,11} {
			\pgfmathsetmacro{\angle}{360/12 * \i}
			\pgfmathsetmacro{\x}{\centerX + \radius * cos(\angle)}
			\pgfmathsetmacro{\y}{\centerY + \radius * sin(\angle)}
			\pgfmathsetmacro{\tangentAngle}{\angle + 90} 
			\node[draw, minimum width=\rectWidth, minimum height=\rectHeight, rotate=\tangentAngle] at (\x,\y) {};
			\node (R\i) [rectangle, minimum height=1.8mm, minimum width=4.5mm, inner sep=0pt, rotate=\tangentAngle] at (\x,\y) {};
			\pgfmathsetmacro{\dx}{0.25 * cos(\tangentAngle)}
			\pgfmathsetmacro{\dy}{0.25 * sin(\tangentAngle)}
			\node (U\i) [rectangle, minimum height=0.1mm, minimum width=0.1mm, inner sep=0pt, rotate=\tangentAngle] at (\x-\dx,\y-\dy) {};
			\node (S\i) [rectangle, minimum height=0.1mm, minimum width=0.1mm, inner sep=0pt, rotate=\tangentAngle] at (\x+\dx,\y+\dy) {};
		}
		
		\draw (v1) to (R7);
		\draw (v1) to [bend left=40] (R7);
		
		\draw (S11) to (v2);
		\draw (S11) to [bend left=40] (v2);
		
		\foreach \x in {7,...,10}{
			\draw (S\x) to (R\the\numexpr \x+1 \relax);
			\draw (S\x) to [bend left=40] (R\the\numexpr \x+1 \relax);
		}
		
		\def\centerX{13.7}
		\def\centerY{0}

		\foreach \i in {0,1,3,4,5,7,8,9,10,11} {
			\pgfmathsetmacro{\angle}{360/12 * \i}
			\pgfmathsetmacro{\x}{\centerX + \radius * cos(\angle)}
			\pgfmathsetmacro{\y}{\centerY + \radius * sin(\angle)}
			\pgfmathsetmacro{\tangentAngle}{\angle + 90}
			\node[draw, minimum width=\rectWidth, minimum height=\rectHeight, rotate=\tangentAngle] at (\x,\y) {};
			\node (R\i) [rectangle, minimum height=1.8mm, minimum width=4.5mm, inner sep=0pt, rotate=\tangentAngle] at (\x,\y) {};
			\pgfmathsetmacro{\dx}{0.25 * cos(\tangentAngle)}
			\pgfmathsetmacro{\dy}{0.25 * sin(\tangentAngle)}
			\node (U\i) [rectangle, minimum height=0.1mm, minimum width=0.1mm, inner sep=0pt, rotate=\tangentAngle] at (\x-\dx,\y-\dy) {};
			\node (S\i) [rectangle, minimum height=0.1mm, minimum width=0.1mm, inner sep=0pt, rotate=\tangentAngle] at (\x+\dx,\y+\dy) {};
		}
		\pgfmathsetmacro{\angle}{360/12 * 2}
		\pgfmathsetmacro{\x}{\centerX + \radius * cos(\angle)}
		\pgfmathsetmacro{\y}{\centerY + \radius * sin(\angle)}
		\node (vt) [circle, draw, inner sep=0pt, minimum size=4mm] at (\x, \y) {$v_\mathrm{t}$};
		\foreach \x in {7, ...,10}{
			\draw (S\x) to (R\the\numexpr \x+1 \relax);
			\draw (S\x) to [bend left=40] (R\the\numexpr \x+1 \relax);
		}
		\foreach \x in {5,4}{
			\draw (U\x) to (R\the\numexpr \x-1 \relax);
			\draw (U\x) to [bend right=40] (R\the\numexpr \x-1 \relax);
		}
		
		\draw  (v2) to (R5);
		\draw  (v2) to [bend right=40] (R5);
		\draw  (v2) to (R7);
		\draw  (v2) to [bend left=40] (R7);
		\draw  (S11) to (R0);
		\draw  (S11) to [bend left=40] (R0);
		\draw  (S0) to (R1);
		\draw  (S0) to [bend left=40] (R1);
		\draw  (S1) to (vt);
		\draw  (S1) to [bend left=40] (vt);
		\draw  (U3) to (vt);
		\draw  (U3) to [bend right=40] (vt);
		
		\node at (2.5, -3.1) {$P_1$};
		\node at (8.1, -3.1) {$P_2$};
		\node at (2.5, 0) {$C_1$};
		\node at (8.1, 0) {$C_2$};
		\node at (13.7, 0) {$C^*$};
		\node at (0.7, 2.5) {$P_1'$};
		\node at (4.5, 2.5) {$P_1''$};
		\node at (6, 2.5) {$P_2'$};
		\node at (10, 2.5) {$P_2''$};
		
	\end{tikzpicture}
	\caption{The block arrangement from \cite{birx} used here in the proof of \cref{thm:4-lower-bound}. Rectangle shapes represent blocks and circle shapes represent single vertices. All drawn edges have weight $k^d$, where we have omitted edges inside the blocks.
		Pairs of edges beginning at the same point inside a block are its exit edges and a pair of edges leading to two different points in a block are its entry edges.
	}\label{fig:lower-bound-4}
\end{figure}

\begin{proof}[Proof of \cref{thm:4-lower-bound}]
	As a first step, we give the construction for the block arrangement as in \cite{birx}.\vspace{2mm}\\
	\noindent
	\textbf{Construction.}
	Let $\epsilon>0$ be arbitrary and choose  $d,k$ large enough (depending on~$\epsilon$) with $k\in\Omega(d^2)$.
	Fix an online graph exploration algorithm \alg.
	Whenever  we refer to a \emph{block} in this proof, we mean a block constructed by $\mathcal{A}_d$ as in \cref{lem:recursive block}.
	Moreover, let~$M$ and $N$ be a large enough integers.
	
	We construct a graph $G$ (adaptively depending on the behavior of $\alg$) as follows (see \cref{fig:lower-bound-4}).
	Let $v_\mathrm{s}$ denote the starting position and identify it with the starting vertices of two blocks, i.e., $v_\mathrm{s}$ is incident to four boundary edges of weight~$k^d$ and the first two such traversed boundary edges are the algorithmic entry edges of two different blocks.
	Whenever the agent successfully explores and leaves a block, its exit vertex is identified with the starting vertex of a new block, so that we obtain two chains of blocks $P_1$ and $P_1'$ and, in each block, we employ strategy $\mathcal{A}_d$.
	This procedure ends when one of the chains consists of $M$ blocks constructed by $\mathcal{A}_d$ and we assume w.l.o.g.~that this occurs for chain $P_1$.
	We let the exit edges of the last explored block lead to a vertex $v_\mathrm{1}$. 
	This vertex is identified with the starting vertex of three further blocks, i.e., $v_1$ is incident to 8 edges of weight~$k^d$ (two exit edges and 6 entry edges).
	The agent starts exploring these and, similarly as before, we build chains of blocks.
	Let $M'$ denote the number of blocks in $P_1'$, i.e., $P_1'$ consists of~$M'-1$ explored blocks and one partially explored block.
	Let $M''$ be the maximum number of blocks contained in any of the three chains starting from $v_1$.
	We stop the process when $M'+M''=M$ and the agent traverses a new exit edge.
	Let $P_1''$ be the chain starting from $v_1$ of length $M''$.
	When the process stops, there is exactly one block that was entered, but not explored: If the last block that was traversed is in $P_1'$, this partially explored block is in $P_1''$, and if the last block that was explored is in~$P_1''$, the partially explored block is in $P_1'$.
	We complete it to some valid block that $\mathcal{A}_d$ constructs (for some algorithm) and call this block $B_1^*$.
	We let the edges of the last blocks in $P_1'$ and $P_1''$ lead to the same vertex $u_\mathrm{1}$.
	
	We obtain a cycle of blocks consisting of $P_1, P_1', P_1''$, which we refer to as~$C_1$ and, from vertex~$v_1$, there are two chains of blocks with unexplored ends of length less than~$M$. We iterate the procedure by ignoring the explored cycle~$C_1$ and interpreting~$v_1$ as the new starting vertex $v_s$.
	This is repeated until we obtain~$N$ such cycles of blocks.
	Then, we have two chains of blocks with unexplored ends beginning from~$v_N$.
	As soon as their total numbers of blocks sum up to $2M$, we let the exit edges of the last blocks in these chains be connected to a closing vertex $v_\mathrm{t}$.
	We call this last cycle of blocks between $v_N$ and $v_\mathrm{t}$ cycle $C^*$.
	Observe that the construction is planar using that the blocks constructed by $\mathcal{A}_d$ are planar.\vspace{1mm}\\
	
	\noindent
	\textbf{Analysis.}
	First, note that every block, except for the blocks $B_i^*$ for each $i$, was only entered via an entry edge before it was explored.
	For each of these blocks~$B$, the agent incurs a cost of $\alg_B$ during the first exploration.
	Next, we will argue that almost every block will be traversed a second time. More precisely, we claim the following, where we use that all blocks have the same block length $2k^d$.
	\begin{claim}
		The agent incurs a total cost of at least $N(2M-1)\cdot 2k^d$ for backtracking blocks.
	\end{claim}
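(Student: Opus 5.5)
The plan is to charge backtracking cycle by cycle and show that each of the $N$ cycles $C_1,\dots,C_N$ forces at least $2M-1$ block traversals beyond the first explorations. Each such traversal costs at least $2k^d$.

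\textbf{Step 1: a single traversal costs at least $2k^d$.} By construction, a block $B$ meets the rest of $G$ only through its starting vertex (equivalently its entry edges) and its exit vertex $\uex$. Call a \emph{traversal} of $B$ a maximal subwalk inside $B$ that enters on one side and leaves on the other. Any such subwalk contains a path between the start and $\uex$, so it costs at least $l_B=2k^d$ by \ref{block-property:length}, and this does not depend on the direction. All blocks have the same length. It therefore suffices to count traversals beyond the first, exploring one: for every block except the $B_j^*$, that first traversal is already charged as $\alg_B$.

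\textbf{Step 2: reduce each cycle to a closed walk on a cycle graph.} Fix $j\le N$ and contract every block of $C_j$ to a single edge. Then $C_j$ becomes a cycle graph with $2M$ edges: the $M$ edges of $P_j$ and the $M$ edges of $P_j'\cup P_j''$, through $v_{j-1}$, $v_j$ and $u_j$. The agent's whole walk is closed, starts at $v_{\mathrm s}$, and must reach $v_{\mathrm t}$. Restricting it to $C_j$ by recording only complete block traversals gives a closed walk on this cycle graph. It starts and ends on the $v_{j-1}$ side, since everything outside $C_j$ attaches only at $v_{j-1}$ and $v_j$. For a closed walk on a cycle, let $w$ be its winding number. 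Every edge is then traversed a number of times that is $\equiv w \pmod 2$ and at least $|w|$, and the net signed count of every edge equals $w$. I will use this fact directly.

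\textbf{Step 3: case analysis on the winding number.} If $w=0$, every explored block is traversed an even number of times, hence at least twice. The $2M-1$ explored blocks of $C_j$ (all but $B_j^*$) then each contribute at least one extra traversal, which gives $(2M-1)\cdot 2k^d$ for this cycle.

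If $w=\pm1$, I use the orientation of the first traversals. Blocks of $P_j$ and $P_j'$ were first traversed away from $v_{j-1}$, and blocks of $P_j''$ away from $v_j$. So $P_j'$ and $P_j''$ point towards $u_j$ from opposite sides of the cycle. Any explored block whose first direction is opposite to $w$ has odd count with net sign $w$, so it needs at least $3$ traversals. I expect this step to be the main obstacle, because blocks aligned with $w$ may be traversed only once, which is exactly the offline behaviour. To overcome this, I would use the adversarial stopping rule. The cycle closes at $u_j$ with $B_j^*$ only partially entered, and the continuation of the construction lies at $v_j$. A winding walk must therefore traverse $B_j^*$ completely, and it must also cover the blocks pointing against $w$ at least three times each. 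I would then pair these surplus traversals with the once-traversed aligned blocks so that the total surplus is still at least $2M-1$. If this pairing fails, the fallback is to absorb the deficit into $\alg_{B_j^*}$ and the extra entry-edge costs, which are lower-order terms once $M\to\infty$.

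Summing the bound over the $N$ cycles gives the claimed total of at least $N(2M-1)\cdot 2k^d$.
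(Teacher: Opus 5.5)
Your reduction to a closed walk on the contracted cycle is correct, and so is the case $w=0$. The case $|w|\ge 2$ is immediate, though you should say so. The gap is the case $w=\pm1$, which you flag yourself, and your fallback does not close it.

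Orient $C_j$ as $v_{j-1}\to v_j$ along $P_j$, then $v_j\to u_j$ along $P_j''$, then $u_j\to v_{j-1}$ along $P_j'$. For $w=-1$, every block of $P_j$ points against $w$, so you already get surplus $2M$. For $w=+1$, however, only the explored blocks of $P_j'$ point against $w$. If $B_j^*\in P_j'$, parity gives only $2(M'-1)$ extra traversals plus one complete traversal of $B_j^*$. You need $2M-1=2M'+2M''-1$, so the deficit is $2M''$ block lengths.

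This deficit is consistent with everything your argument uses. Take the projected walk that goes out along the explored part of $P_j'$ and back, then once around through $P_j$, $P_j''$, $B_j^*$ and the explored part of $P_j'$. It has winding number $+1$, respects all first-traversal directions, and traverses $P_j$ and $P_j''$ only once. So no pairing based on winding number and parity can give $2M-1$.

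Absorbing the deficit fails too. It is $\Theta(M)$ block lengths when $M''$ is a constant fraction of $M$, while what you can guarantee from $B_j^*$ and a bounded number of entry edges does not grow with $M$. Moreover, the claim is an exact inequality, so lower-order slack is not available.

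The missing idea is the order of events, which the projection forgets. Two facts matter:
\begin{itemize}
\item Before the trigger, the cycle is not yet closed at $u_j$, so every move between $v_j$ and the $v_{j-1}$ side passes through $P_j$.
\item At the trigger, the agent stands at $u_j$. Since then $M''\le M-1$, the vertex $v_{j+1}$ does not exist yet, so the agent must reach $v_j$ afterwards.
\end{itemize}

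In the bad case ($w=+1$, $B_j^*\in P_j'$), the trip from $u_j$ to $v_j$ has two options. It either runs backwards through all of $P_j''$, and then net count $+1$ forces three traversals of each of its blocks (surplus $2M''$). Or it goes through $B_j^*$ and $P_j'$ and then forward through $P_j$ a second time, which forces three traversals of each block of $P_j$ (surplus $2M$). Either way the total reaches $2M-1$.

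If instead $B_j^*\in P_j''$, the trigger happens in $P_j'$ after the agent has visited $v_j$. By the first fact, it backtracked all of $P_j$ in between, and with $w=+1$ this again gives surplus $2M$.

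The paper uses exactly these facts, but without winding numbers. It adds time-disjoint distance bounds:
\begin{itemize}
\item the backtracking forced before the trigger (all $M$ blocks of $P_j$, or the $M'-1$ explored blocks of $P_j'$);
\item in the latter case, the trip from $u_j$ to $v_j$ (at least $M''$ blocks);
\item the final return from $v_j$ to $v_{j-1}$ (at least $M$ blocks).
\end{itemize}
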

	\begin{proof}
		We only argue that the algorithm incurs a cost of $(2M-1)\cdot 2k^d$ for backtracking in $C_1$ and it will be clear that this holds similarly for all other cycles $C_2, \dots, C_N$ (but not for $C^*$).
		We distinguish two cases: $M'+M''=M$ was triggered by (1) exploration of a block in $P_1'$ or (2) by exploration of a block in $P_1''$.
		
		In case (1), observe that the agent already backtracked all $M$ blocks in $P_1$ at the time when~$M'+M''=M$ was triggered, so that it incurred a cost of at least $M\cdot 2k^d$.
		Then, the agent has to travel to~$v_1$ to reach cycle $C_2$.
		Since it has to return to the starting position when all vertices were explored, it needs to traverse a path from~$v_1$ to $v_s$ at the end.
		A shortest such path is to again traverse all blocks in $P_1$ so that the agent incurs another cost of $M\cdot 2k^d$ for backtracking in $C_1$.

		In case (2), observe that the agent backtracked all blocks in $P_1'\setminus \{B_1^*\}$ at the time when~$M'+M''=M$ was triggered. To proceed with the exploration, the agent has to travel to $v_1$ and the shortest way for this is by backtracking all blocks in chain~$P_1''$, so we can charge this cost by assuming that these blocks were backtracked.
		At the end, the agent has to return to the starting position, for which it needs to travel from $v_1$ to $v_\mathrm{s}$ and the shortest way for this is by backtracking all blocks in chain $P_1$.
		Together, we obtain that, in either case, we can assume that the agent backtracks all blocks in~$C_1\setminus \{B_1^*\}$ and we have $|C_1\setminus \{B_1^*\}|=(2M-1)$.
		
	\end{proof}
	
	The claim implies that, for every block, except for $B_i^*$ ($i\in\{1, \dots, N\}$) and the blocks in $C^*$, the agent incurs a cost of $\alg_B$ for the first traversal, and a cost of~$2k^d=l_B$ for backtracking.
	We obtain
	\begin{align}
		\alg(G, v_s)&\geq\sum\limits_{j=1}^N\sum\limits_{B \in C_j\setminus \{B_j^*\}}\alg_B+l_B
		=\sum\limits_{j=1}^N\sum\limits_{B \in C_j\setminus \{B_j^*\}}\core_B+\opt_B\nonumber\\
		&\geq (4-\epsilon)\cdot \sum\limits_{j=1}^N\sum\limits_{B \in C_j\setminus \{B_j^*\}}\opt_B,\label{eq:alg-lower-bound-final}
	\end{align}
	where we have used \cref{lem:recursive block} and that $k$ and $d$ are large enough (depending on~$\epsilon$).
	
	At the same time, in the offline optimum, the agent only needs to explore every block once and never backtracks a block.
	We obtain
	\begin{align}
		\opt(G)&\leq\sum\limits_{j=1}^{N}\left(3k^d+\sum\limits_{B \in C_j}\opt_B\right) + \sum\limits_{B \in C^*}\opt_B + 2k^d\nonumber \\
		&\leq \sum\limits_{j=1}^{N} \left(\sum\limits_{B \in C_j\setminus\{B_j^*\}}\opt_B \right)+ (2N+2M+1)(d+2)k^{d}\label{eq:opt-lower-bound-final},
	\end{align}
	where, for the last inequality, we have estimated each of $\opt_B, 3k^d, 2k^d$ by $(d+2)k^d$ (for $B=B_j^*$ and $B\in C^*$) and used that $C^*$ consists of $2M$ blocks.
	
	Note that 
	\begin{align*}
		&\frac{(2N+2M+1)(d+2)k^d}{\alg(G, v_s)}
		\leq \frac{(2N+2M+1)(d+2)k^d}{N\cdot (2M-1)2k^d} \to 0 \hspace{2mm}(M,N \to \infty),
	\end{align*}
	where we have used the safe lower bound $\alg(G, v_s)\geq N(2M-1)l_B=N(2M-1)2k^d$.

	By \eqref{eq:alg-lower-bound-final} and \eqref{eq:opt-lower-bound-final}, we obtain together with the last inequality for the competitive ratio that $\liminf_{M,N\to\infty}\alg(G, v_s)/\opt(G)\geq (4-\epsilon)$.
	Since $\epsilon>0$ was chosen arbitrarily, we obtain that the competitive ratio is at least 4 on planar graphs.
	Combining with the fact that any planar construction can be made subcubic (\cref{cor:exploration-restrictions}), this completes the proof of \cref{thm:4-lower-bound}.
\end{proof}

Last, we remark that our analysis for the constructed graph is tight.
To see this, one can easily verify that depth-first search (ignoring edge weights) is 4-competitive on our construction.

\paragraph{Acknowledgements.}
I thank Yann Disser for valuable feedback on the writeup of this paper.

 \bibliographystyle{alpha}
 \bibliography{References.bib}

\end{document}